\documentclass[a4paper,english]{article}

\bibliographystyle{plainurl}

 \usepackage{appendix}
\usepackage{stackrel}
\usepackage{tabularx}
\usepackage{amsfonts}
\usepackage{cmll}
\usepackage{amsthm}
\usepackage{amsmath}
\usepackage{proof}
\usepackage{mathdots}
\usepackage{xcolor}
\usepackage{amssymb}
\usepackage{color}
\usepackage{stackrel}

\usepackage{adjustbox}

\usepackage{tikz}
\usetikzlibrary{cd}

\usepackage{mathrsfs}
\usepackage{mathabx}

\usepackage[all]{xy}
\usepackage{lscape}
\usepackage{stmaryrd}
\usepackage{mathdots}

\usepackage{bussproofs}
\usepackage{subcaption}

\EnableBpAbbreviations

\newtheorem{example}{Example}[section]
\newtheorem{definition}{Definition}[section]

\newtheorem{remark}{Remark}[section]
\newtheorem{theorem}{Theorem}[section]

\newtheorem{lemma}[theorem]{Lemma}

\newtheorem{proposition}[theorem]{Proposition}

%
%
%
%
%

\newcommand{\DTT}[0]{\mathrm{dTT}}
\newcommand{\Type}[0]{\mathrm{Type}}
\newcommand{\DType}[0]{\mathrm{dType}}
\newcommand{\J}[0]{\mathsf{J}}
\newcommand{\Der}[1]{\mathsf{D}[#1]}
\newcommand{\Dist}[1]{#1^{\mathsf D}}
\newcommand{\de}[0]{\mathsf{d}}
\newcommand{\DLR}[0]{\mathsf{DLR}}
\newcommand{\Bool}[0]{\mathrm{Bool}}
\newcommand{\LET}[0]{\mathbf{let}\ }
\newcommand{\IN}[0]{\ \mathbf{in} \ }
\newcommand{\refl}[0]{\mathsf{refl}}

\newcommand{\CS}[0]{\mathsf{CS}}

\newcommand{\tri}[0]{\pitchfork}

\newcommand{\CTX}[0]{\mathsf{Ctx}}
\newcommand{\Nat}[0]{\mathrm{Nat}}

\newcommand{\Real}[0]{\mathrm{Real}}

\newcommand{\modd}[1]{\llbracket#1\rrbracket }

\newcommand{\Set}{\mathrm{Set}}
\newcommand{\Met}{\mathrm{Met}}

\newcommand{\B}[1]{\mathbf{#1}}

\newcommand{\model}[1]{\modd{#1}}
\newcommand{\nudel}[1]{\llparenthesis #1\rrparenthesis}

\newcommand{\TT}[1]{\mathtt{#1}}
\newcommand{\D}[1]{\mathscr{#1}}

\newcommand{\C}[1]{\mathcal{#1}}
\newcommand{\BB}[1]{\mathbb{#1}}
\newcommand{\BS}[1]{\boldsymbol{#1}}

\newcommand{\F}[1]{\mathfrak{#1}}


\definecolor{color0}{HTML}{4682B4}


\usepackage{amsthm}
\usepackage{hyperref}

\setlength\voffset{-1in}
\setlength\topmargin{0.5cm}
\setlength\headheight{1.5cm}
\setlength\headsep{0.5cm}
\setlength\textheight{24cm}
\setlength\footskip{0.5cm}
\setlength\hoffset{-1in}
\setlength\oddsidemargin{3.5cm}
\setlength\textwidth{14cm}

\title{From Identity to Difference: A Quantitative Interpretation of the Identity Type} 

\author{Paolo Pistone\\ Universit\`a di Bologna, Italy\\ \url{paolo.pistone2@unibo.it}}
\date{}
%

\begin{document}

\maketitle

\begin{abstract}
We explore a quantitative interpretation of 2-dimensional intuitionistic type theory (ITT) in which the identity type is interpreted as a ``type of differences''. 
We show that a fragment of ITT, that we call difference type theory (dTT), yields a general logical framework to talk about quantitative properties of programs like approximate equivalence and metric preservation. To demonstrate this fact, we show that dTT can be used to capture compositional reasoning in presence of errors, since any  program can be associated with a ``derivative'' relating errors in input with errors in output. 
Moreover, after relating the semantics of dTT to the standard weak factorization systems semantics of ITT, we describe the interpretation of dTT in some quantitative models developed for approximate program transformations, incremental computing, program differentiation and differential privacy.

\end{abstract}

\section{Introduction}


\subparagraph*{From Program Equivalence to Program Differences}
In program semantics, a classical problem is to know whether two programs behave in \emph{the same} way {in all} possible contexts. Yet, in several fields of computer science, especially those involving numerical and probabilistic forms of computation (like e.g.~machine learning), it is often more important to be able to describe to which extent two programs behave in a \emph{similar}, although non equivalent, way. Hence, a crucial aspect is to be able to measure the change in the overall result that is induced by the replacement of a (say, computationally expensive) program by some (more efficient but only) approximately correct one.

These observations have motivated much research on denotational semantics involving 
\emph{metric} and \emph{differential} aspects, that is, in which one can measure differences between programs, as well as their capacity of amplifying errors. 
For instance, it has been observed that a fundamental property for a protocol to ensure \emph{differential privacy} \cite{Reed_2010, Barthe_2012, Gaboardi_2013} is that the associated program is not too \emph{sensitive} to errors; this has led to an elegant semantics \cite{Gaboardi2017} where types are interpreted as metric spaces and programs are interpreted by functions with bounded derivative (i.e.~\emph{Lipschitz-continuous} functions). More generally,  
the recent literature in theoretical computer science has seen the  
blossoming of many different
 notions of ``derivative'' for higher-order programming languages, each accounting for some differential aspect of computation: from connections with linearity (e.g.~the \emph{differential $\lambda$-calculus} \cite{difflambda, Blute2009, Manzo2010}) to incremental computation \cite{Cai2014, Picallo2019, Picallo2019b}, from higher-order automatic differentiation \cite{Pearlmutter2016, Mazza2021} to higher-order {approximate program transformations} \cite{chaudhuri, dallago, dallago2}.

%
%


Do all these differential approaches share a common logic? Is there some common notion of ``derivative'' for higher-order programs? In this paper we  
%
%
argue that a proper fragment of standard intuitionistic type theory (ITT in the following), that we call \emph{difference type theory} ($\DTT$ in short) provides a convenient framework to formalize the compositional reasoning about program derivatives found in some of these semantics. 

\subparagraph*{The Identity Type}



Our quantitative approach to intuitionistic type theory relies on a non-standard interpretation of the identity type. 
When Martin-L\"of introduced ITT \cite{Mlof75}, the identity type $I_{A}(t,u)$ was one of its main novelties. Under the Curry-Howard correspondence, the elements of $I_{A}(t,u)$ were interpreted as proofs of the fact that $t$ and $u$ denote the same object of type $A$. 
The introduction rule for the identity type constructs an object $\refl(t)\in I_{A}(t,t)$ for all $t\in A $, witnessing the equality $t=t\in A$; instead, the elimination rule provides a computational interpretation of Leibniz's indiscernibles principle by justifying a form of \emph{transport of identity}: from an equality proof $p\in I_{A}(t,u)$ and a proof $q\in \C C(t,t,\refl(t))$, one can construct a proof $\J (t,u,p,q)$ of $\C C(t,u,p)$. 

As is well-known, ITT comes in two flavors: in the \emph{extensional} version one has rules for passing to and from 
$p\in I_{A}(t,u)$ and $t=u\in A$, i.e.~saying that a proof of $I_{A}(t,u)$ exists precisely when $t=u\in A$ holds; with these rules, one can show that any element of $I_{A}(t,u)$, if any, is of the form $\refl(t)$. 
In the \emph{intensional} version these additional rules are not present (with the significant advantage that type-checking becomes decidable), and this leaves space for non-standard interpretations, as we will see. 

In more recent times, a new wave of interest in intensional ITT has spread in connection with an interpretation relating it to homotopy theory \cite{hottbook}: by exploiting the transport of identity principle, one can prove that the dependent type $I_{A}(x,y)$ carries the structure of a \emph{groupoid} (i.e.~a category with invertible arrows), and that proofs $f: A\to B$ lift to \emph{functors} $I_{A}(f): I_{A}(x,y)\to I_{B}(f(x),f(y))$ between the respective groupoids. This is the basic ground for 
 a suggestive  and well-investigated interpretation,
where $I_{A}(t,u)$ becomes the space of \emph{homotopies} between $t$ and $u$.

While the original semantics of ITT based on \emph{locally cartesian closed category} \cite{Jacobs} validates the extensionality rules (which make the homotopy interpretation trivial),  
an elegant semantics for intensional ITT has been established since \cite{Gambino2008, Awodey2009, Garner2012} based on the theory of \emph{weak factorization systems}, providing the basis for the construction of various homotopy-theoretic models.

\subparagraph*{The Type of Differences}

Without extensionality, there can be {different} ways of proving $I_{A}(t,u)$, as we saw. What if these were not seen as ways of proving that $t$ and $u$ denote the same object, but rather as ways of measuring the \emph{difference} between $t$ and $u$? The main point of this paper is to convince the reader that this idea is not only consistent, but yields some interesting new interpretation of (a fragment of) intensional ITT.

For example, in presence of a type $\Real$ of real numbers with constants $\B r$ for all $r\in \BB R$, 
we might wish to interpret an element $a\in I_{\Real}(\B r, \B s)$ as a difference between $\B r$ and $\B s$, that is, as a positive real greater or equal to $ |r-s|$. The introduction rule produces the element $\refl(\B r)=\B 0\in I_{\Real}(\B r, \B r)$, the \emph{self-distance} of $\B r$; the interpretation of the elimination rule is more delicate, as this rule makes reference to arbitrary predicates; yet, let us consider a predicate of the form $\C C(x,y,p)=I_{\Real}(f(x),g(y))$, where $x,y\in \Real$, $p\in I_{A}(x,y)$ and $f,g\in \Real\to \Real$ are two \emph{smooth} (i.e.~infinitely differentiable) functions. 
%
%
%
%
Then given a difference $a\in I_{\Real}(\B r, \B s)$ and a difference $b\in I_{\Real}(f(\B r), g(\B r))$ we can obtain a difference $c \in I_{\Real}(f(\B r), g(\B s))$ by reasoning as follows: 
first, since $g$ is smooth, by standard analytical reasoning (read: the mean value theorem) we can find some positive real $L_{r,s}$ such that $ L_{r,s} \cdot a\geq  L_{r,s}\cdot |r-s|\geq  | g(r)-g(s)|\in I_{\Real}(g(\B r), g(\B s))$; hence the operation $r,s,a \mapsto L_{r,s}\cdot a$ yields a way to \emph{transport differences} between $r$ and $s$ into differences between $g(r)$ and $g(s)$. This difference can now be used to produce a difference $\J(\B r, \B s, a, b)=b+L_{r,s} \cdot a\in I_{\Real}(f(\B r), g(\B s))$, as required by the elimination rule applied to $\C C(x,y,p)$.  

More generally, we will see that by interpreting higher-order programs as suitably ``differentiable'' maps, one can justify different kinds of ``transport of difference'' arguments, yielding quantitative interpretations of the elimination rule for the identity type.

%
%

%
%
%
%
%
%

\subparagraph*{Plan of the Paper}

In Section \ref{sec2} we introduce difference type theory $\DTT$, a fragment of ITT in which $I_{A}(t,u)$ is seen as a type of differences, and we provide a short overview of the kind of compositional differential reasoning formalizable in this system. 
In Section \ref{sec3} we introduce a notion of model for $\DTT$ (that we call a \emph{$\DTT$-category}), and we prove that any instance of this notion yields a form of weak factorization system, thus relating our semantics to the usual semantics of the identity type.

In later sections we sketch some differential models of $\DTT$. 
In Sec.~\ref{sec7} we describe the interpretation of a sub-exponential version of $\DTT$ in the metric semantics used for differential privacy \cite{Gaboardi2017}, providing a formal language to express metric preservation.
In Sec.~\ref{sec4} we show that $\DTT$ yields a natural language for differential logical relations \cite{dallago, dallago2, LICS2021}, an approach to approximate program transformations in which program differences are themselves higher-order entities. In Sec.~\ref{sec5} we provide an interpretation of $\DTT$ in models of higher-order incremental computing \cite{Cai2014}, with program differences interpreted as increments; finally, in Sec.~\ref{sec6} we show an interpretation of $\DTT$ in models of the 
differential $\lambda$-calculus \cite{difflambda, Blute2009, Manzo2010}. 

%

%
%
%

\section{Difference Type Theory}\label{sec2}

Let us start with a motivating example:
 suppose $\TT H: (\Nat \to \Real)\to \Real $ is a program that takes a function $\TT f$ from integers to reals and computes a value $\widetilde{\TT H}(\TT f(0),\dots, \TT f(N))$ depending on the first $N+1$ outputs of $\TT f$. For instance, $\TT H$ might compute some aggregated value from a time series $\TT f$ (e.g.~$\TT H$ computes the average temperature in London from a series of measures taken from $\TT f$).
%
Since measuring $\TT f$ every, say, minute might be too expensive, it might be worth considering an \emph{approximated} computation, in which $\TT f$ is only measured \emph{every $k$ minutes}, (i.e.~$\TT f$ is applied only to values $0,k,2k,\dots, \lfloor N/k\rfloor$), and each computed value is fed to $\widetilde{\TT H}$ $k$ times (this technique is well-known under the name of \emph{loop perforation} \cite{loopperf}).

What is the error we can expect for the replacement of $\TT H(\TT f)$ by its approximation?
We will show that a fragment of ITT, that we call
\emph{difference type theory} (in short $\DTT$), can be used to reason about this kind of situations in a natural and compositional way.
%
%
%

\subparagraph*{The Syntax of $\DTT$}
The fragment of ITT we consider in this paper includes two universes of types $\Type$ and $\DType$ (whose elements will be indicated, respectively, as $A,B,C,\dots$ and as $\C A, \C B, \C C,\dots$), with formation rules illustrated in Fig.~\ref{fig:types}.
In our basic language the types $A\in \Type$ are just usual \emph{simple types} (yet in our examples we will often consider extensions or variants of this language).
 The types $\C A\in \DType$ can depend on terms of some simple type; we will often refer to them as  \emph{predicates}; intuitively, an element of some predicate $\C A(t,u)$, depending on terms $t,u$ of some simple type, will be interpreted as denoting \emph{differences}, or \emph{errors}, between the terms $t,u$.

\begin{figure}[t]
\fbox{
\begin{minipage}{0.9\textwidth}
\begin{center}
\adjustbox{scale=0.8, center}{
$
\AXC{$A,B\in \Type$}
\UIC{$A \to B, A\times B \in \Type$}
\DP
\qquad \qquad
\AXC{$A\in \Type$}
\AXC{$t,u \in A$}
\BIC{$D_{A}(t,u)\in \DType$}
\DP
$ 
}

\medskip

\adjustbox{scale=0.8, center}{
$
\AXC{\phantom{$(x\in A)$}}
\noLine
\UIC{$\C A, \C B\in \DType$}
\UIC{$\C A\times \C B\in \DType$}
\DP
\qquad \qquad 
\AXC{$A\in \Type$}
\AXC{$(x\in A)$}
\noLine
\UIC{$\C C(x)\in \DType$}
\BIC{$(\Pi x\in A)\C C(x)\in \DType$}
\DP
$
}

\medskip

\adjustbox{scale=0.8, center}{
$
\AXC{$A\in \Type$}
\AXC{$(x,y\in A)$}
\noLine
\UIC{$\C C(x,y)\in \DType$}
\BIC{$(\Pi x,y\in A)(D_{A}(x,y) \to \C C(x,y))\in \DType$}
\DP
$

}
\end{center}
\end{minipage}
}
\caption{Type formation rules for $\DTT$.}
\label{fig:types}
\end{figure}

\begin{figure}[t]
\fbox{
\begin{minipage}{0.9\textwidth}

\adjustbox{scale=0.8, center}{
$
\AXC{\phantom{$(x\in A)$}}
\noLine
\UIC{$t\in A\phantom{)}$}
\UIC{$\partial(t) \in D_{A}(t,t)$}
\DP
\qquad
\AXC{$a\in D_{A}(t,u)$}
\AXC{$(x,y\in A$)}
\noLine
\UIC{$\C C(x,y)\in \DType$}
\AXC{$(x\in A)$}
\noLine
\UIC{$b(x) \in \C C(x,x)$}
\TIC{$\J (t,u,a,[x]b) \in \C C (t,u)$}
\DP
$}

\medskip

\adjustbox{scale=0.8, center}{
$
\AXC{$t\in A$}
\AXC{$(x\in A)$}
\noLine
\UIC{$b(x) \in \C C(x,x)$}
\RL{$(\beta)$}
\BIC{$\J (t,t,\partial(t),[x]b)= b[t/x] \in \C C (t,t)$}
\DP
\qquad
\AXC{\phantom{$(x\in A)$}}
\noLine
\UIC{$a\in D_{A}(t,u)$}
\RL{$(\eta)$}
\UIC{$ \J (t,u,a, [x]\partial(x)) = a \in D_{A}(t,u) $}
\DP
$}

\end{minipage}
}
\caption{Introduction, elimination and computation rules for $D_{A}(t,u)$.}
\label{fig:rules}
\end{figure}

%
%
%
%

The introduction, elimination and computation rules of $\DTT$ are those of standard intuitionistic type theory, restricted to the types of $\DTT$ (we recall them in the Appendix). We use $D_{A}(t,u)$ instead of $I_{A}(t,u)$ for the usual identity type, since we are thinking of it as a \emph{type of differences}. 
We illustrate in Fig.~\ref{fig:rules} the rules for the difference type $D_{A}(t,u)$. Terms are constructed starting from a countable set of \emph{term variables} $x,y,z,\dots$ and a countable (disjoint) set of \emph{difference variables} $\epsilon, \delta, \theta,\dots$.
The rules in Fig.~\ref{fig:rules} must be read as dependent on some \emph{context}, which, for $\DTT$, 
are of the form $(\B x\in \Phi_{0}\mid\BS\epsilon \in \Phi_{1}(\B x))$, where 
 $\B x\in \Phi_{0}= (x_{1}\in A_{1},\dots, x_{n}\in A_{n})$
is a sequence of (non-type dependent) declarations for $A_{i}\in \Type$, and  
$\BS\epsilon \in\Phi_{2}(\B x)=( \epsilon_{1}\in \C C_{1}(\B x), \dots, \epsilon_{m}\in \C C_{m}(\B x))$ is a sequence of declarations for $(\B x\in \Phi_{0})\C C_{i}(\B x)\in \DType$.

A term $(\B z\in \Phi_{1})t\in A$, where $A\in \Type$, is just an ordinary $\lambda$-term with pairing.
We call such terms \emph{program terms} and we use $t,u,v$ for them. 
We will consider variants of this basic language with other type and term primitives (e.g.~ground types like $\Nat, \Bool$ or the probabilistic monad $\Box A$), as well as a \emph{sub-exponential} variant ST$\lambda$C$^{!}$, with bounded linear types of the form $!_{k}A\multimap B$ (see \cite{Girard92, Reed_2010}), described in the Appendix.

A term $(\B x\in \Phi_{0}\mid \BS \epsilon \in \Phi_{1}) a\in \C A$, for some $(\B x\in \Phi_{0})\C A \in \DType$ belongs to the grammar
\begin{center}
$
a ::= \epsilon \mid \lambda x.a\mid at \mid \lambda xy\epsilon.a \mid atta \mid
\langle a,a\rangle\mid \pi_{1}(a)\mid \pi_{2}(a)\mid 
\partial(t) \mid \J(t,t,a,[x]a)
$
\end{center}
We call such terms \emph{difference terms} and we will use $a,b,c$ for them.

Intuitively, a term of the form $\partial(t)$ (i.e.~$\mathsf{refl}(t)$ in ITT) indicates the \emph{self-difference} of $t$.
For example, when considering semantics based on metric spaces, $\partial(t)$ will represent the null error, i.e.~$0$.
 However, in other models of $\DTT$, $\partial(t)$ needs not be zero (in fact, 0 is not even part of our basic syntax). In particular, in the models from Section 5 and 6, for a higher-order function $f\in A\to B$, $\partial(f)$ will provide a measure of the sensitivity of $f$ (in fact, in such model $\partial(f)$ coincides with the \emph{derivative} of $f$, see below).

The terms of the form $\J (t,u,a,[x]b)$ are the main computational objects (and also the least intuitive) of $\DTT$. The idea behind the quantitative interpretation of this constructor is that, given self-differences $b(x)\in \C C(x,x)$, $\J$ ``transports'' an error  $a$ between $t$ and $u$,  measured in $D_{A}$, onto an error between $t$ and $u$ 
measured in $\C C$. For example, as discussed in the introduction, $\C C(x,y)$ might be the type of differences $D_{B}(fx, gy)$, and $\J$ will thus transport a difference $a$ between $t$ and $u$ onto a difference between $ft$ and $gu$.

A fundamental application of $\J$ is the following:
for any function $f\in A\to B$, the \emph{derivative of $f$} is the following difference term
$$
\Der f:= \lambda xy\epsilon. \J (x,y,\epsilon,  [x]\partial(fx))\in
(\Pi x,y\in A)(D_{A}(x,y)\to D_{B}(fx,fy))
$$
$\Der f$ tracks errors in input into errors in output of $f$, and can thus be taken as a measure of the \emph{sensitivity} of $f$. In the models of $\DTT$ described in the following sections $\Der f$ will be interpreted by different notions of program derivative, including the ``{true}'' derivative of $f$, when the latter encodes a real-valued smooth function.

From the computation rules of $\J$ we deduce the following computation rules for derivatives:
\begin{align*}
\Der f(t ,t, \partial(t)) & = \partial(ft) \tag{$\beta\mathsf D$}\label{betad} \\
\Der{\lambda x.x}(t,u,a) & = a  \tag{$\eta\mathsf D$}\label{etad} 
\end{align*}
\eqref{betad} says that the derivative of $f$ computed on the self-distance of a point is just the self-distance of the image of the point. When $\Der f$ is seen as the ``true'' derivative, the self-distances $\partial(v)$ are just the null error $0$, and so \eqref{betad} says that the derivative computed in $0$ is $0$. 
\eqref{etad} says that the error produced in output by the identity function is just the error in input (this is in accordance with the analytical intuition too). 

Given $f\in A\to B$ and $g\in B\to C$ the composition of $\Der f$ with $\Der g$ yields a difference
of type $(\Pi x,y\in A)(D_{A}(x,y)\to  D_{C}(gf(x),gf(y))$. 
Several models of $\DTT$ will satisfy the \emph{chain rule} axiom below, which identifies the latter with the derivative of $g\circ f$:
\begin{align}
\Der{\lambda x.g(fx)} = \lambda xy\epsilon. \Der g (fx) (fy) (\Der f xy\epsilon)
\tag{$\mathsf{D}\mathrm{chain}$}\label{dchain}
\end{align}


%
%
%
%
%
%

\begin{example}\label{example:1}

For all $f\in \Nat\to \Real$, let  $ f^{*}\in \Nat\to \Real$ be defined by $ f^{*}(2i)= f(i)$ and $ f^{*}(2i+1)= f^{*}(2i)$. The loop perforation of index 2 of $\TT H( f)$ is precisely $\TT H( f^{*})$. We sketch how to construct a difference between $\TT H( f)$ and $\TT H( f^{*})$ in $\DTT$.

The step function $\Delta f (x)=|f(x+1)-f(x)|$ can be defined as $\Delta f(x)=\Der f(x+1,1)$ (where we take a difference between $x,y\in \Nat$ to be any positive real $\geq |x-y|$).  
For all $x\in \Nat$, the function $d$ with $d(2x)=0$ and $d(2x+1)=\Delta f(2x)$ yields then an element $d\in (\Pi x\in \Nat)D_{\Real}(f(x), f^{*}(x))$. 
Using this and the derivative of $\widetilde{\TT H}$ we can compute a distance
$b \in D_{\Real}( {\TT H}( f), {\TT H}( f^{*}))$ by 
$b= \Der{\widetilde{\TT H}}([ f(i),d(i)]_{i=0,\dots,N})$.

More generally, we can construct a function $c\in(\Pi f,g\in \Nat\to \Real)(D_{\Nat\to\Real}(f,g)\to D_{\Real}( {\TT H}(f), {\TT H}(g))$:   from  a distance $\epsilon\in D_{\Nat\to \Real}(f,g)$ we can deduce a function 
$e(\epsilon)\in (\Pi x\in \Nat)D_{\Real}(f(x),g(x))$ by letting $e(\epsilon)= \J (f,g,\epsilon,[x]\partial(f(x)))$, and we define $c(f,g,\epsilon)$ by replacing $d$ by $e(\epsilon)$ in $b$. 

\end{example}

\begin{example}[distance function]\label{ex:distance}

In presence of a type $\Bool$ for Booleans, with constants $\B 0, \B 1\in \Bool$ and $\mathsf{case}_{C}:\Bool \to C\to C \to C$ (with $\mathsf{Case}_{C}(\B 0, x,y)=x$ and $\mathsf{Case}_{C}(\B 1, x,y)=y$), and with a constant $\infty \in D_{\Bool}(\B 0, \B 1)$, it is possible to define a distance function 
$d_{A}\in (\Pi x,y\in A)D_{A}(x,y)$ for all simple type $A$, by letting 
$d_{A}= \lambda xy.\J ( \B 0, \B 1,\infty, [w]\partial(\mathsf{case}_{A}(w,x,y)))$.
If we admit the equational rule $\mathsf{Case}_{C}(w,x,x)=x$, then using the Equation \eqref{weak} (see below) we can deduce that $d_{A} xx $ coincides with the self-difference $\partial(x)$. 

\end{example}
\subparagraph*{Predicates in $\DTT$}
An important property of $\DTT$ is that any predicate $(\B x\in \Phi_{0})\C C(\B x)\in \DType$ is obtained by substitution from a special family of binary predicates, defined below.

\begin{definition} A predicate is said \emph{pure for $A$} if it is of the form $(x,y\in A)\C C(x,y)$
%
and one of the following holds:
\begin{itemize}
\item $\C C(x, y)=D_{A}(x,y)$;
\item $A=B_{1}\times B_{2}$ and $\C C( x,  y)= \C B_{1} (\pi_{1}(x), \pi_{1}(y))\times \C B_{2}(\pi_{2}(x), \pi_{2}(y))$, where $\C B_{1}$ is pure for $B_{1}$ and $\C B_{2}$ is pure for $B_{2}$;
\item $A=B\to C$ and $\C C( x,  y)= (\Pi z\in A)\C B(xz, yz)$, where $\C B(z,z')$ is pure for $C$;
\item $A=B\to B\to C$ and $\C C( x,  y)= (\Pi x',y'\in B)(D_{A}(x',y')\to \C B(xx'y',  yx'y'))$, where $\C B(x',y')$ is pure for $C$.

\end{itemize}
\end{definition}
%
%
%
\begin{lemma}\label{lemma:pure}
For any predicate $(\B z\in \Phi_{0})\C C(\B z)$ there exists a pure predicate $(x, y\in A)\C C^{\flat}( x,  y)$ and terms $  (\B z\in \Phi_{0})t,u\in A$ such that 
$\C C(\B z)= \C C^{\flat}( t,  u)$.
\end{lemma}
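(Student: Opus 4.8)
The plan is to argue by structural induction on the derivation of $(\B z\in\Phi_{0})\C C(\B z)\in\DType$, following the four type-formation rules of Fig.~\ref{fig:types}. The pure predicates have been set up precisely so that their four clauses match these four formation rules one-to-one; so in each case I will exhibit a pure predicate whose template mirrors the outermost constructor of $\C C$, together with substitution terms $(\B z\in\Phi_{0})t,u\in A$, and check that substituting $t,u$ reconstitutes $\C C(\B z)$ up to definitional equality. The base case is immediate: if $\C C(\B z)=D_{A}(t,u)$ for terms $(\B z\in\Phi_{0})t,u\in A$, take $\C C^{\flat}(x,y)=D_{A}(x,y)$, which is pure for $A$ by the first clause, and substitute $x:=t$, $y:=u$.

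For the product case $\C C(\B z)=\C A(\B z)\times\C B(\B z)$, the induction hypothesis yields pure predicates $\C A^{\flat}$ (for some $A_{1}$) and $\C B^{\flat}$ (for some $A_{2}$) with $\C A(\B z)=\C A^{\flat}(t_{1},u_{1})$ and $\C B(\B z)=\C B^{\flat}(t_{2},u_{2})$, where $t_{i},u_{i}\in A_{i}$ live in context $\Phi_{0}$. I then form the pure predicate $\C C^{\flat}(x,y)=\C A^{\flat}(\pi_{1}(x),\pi_{1}(y))\times\C B^{\flat}(\pi_{2}(x),\pi_{2}(y))$ for $A_{1}\times A_{2}$ and substitute the pairs $x:=\langle t_{1},t_{2}\rangle$, $y:=\langle u_{1},u_{2}\rangle$. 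The product computation rules $\pi_{i}\langle t_{1},t_{2}\rangle=t_{i}$ collapse $\C C^{\flat}(\langle t_{1},t_{2}\rangle,\langle u_{1},u_{2}\rangle)$ back to $\C A^{\flat}(t_{1},u_{1})\times\C B^{\flat}(t_{2},u_{2})=\C C(\B z)$, using that the type formers respect definitional equality of their term arguments.

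The two $\Pi$-cases are the core of the argument, and the place where the context bookkeeping requires care. Consider first $\C C(\B z)=(\Pi w\in W)\C D(\B z,w)$. Applying the induction hypothesis to $(\B z\in\Phi_{0},\,w\in W)\C D\in\DType$ gives a pure $\C D^{\flat}$ for some type $D$ and terms $(\B z\in\Phi_{0},\,w\in W)s,r\in D$ with $\C D(\B z,w)=\C D^{\flat}(s,r)$. I take the pure predicate $\C C^{\flat}(x,y)=(\Pi w\in W)\C D^{\flat}(xw,yw)$ for $W\to D$ and substitute $x:=\lambda w.s$ and $y:=\lambda w.r$, which are now terms in context $\Phi_{0}$ since the abstraction discharges $w$; the $\beta$-rule then gives $(\lambda w.s)w=s$, so $\C C^{\flat}(\lambda w.s,\lambda w.r)=(\Pi w\in W)\C D^{\flat}(s,r)=\C C(\B z)$. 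The remaining case $\C C(\B z)=(\Pi x,y\in A)(D_{A}(x,y)\to\C E(\B z,x,y))$ is handled identically: the hypothesis on $(\B z\in\Phi_{0},\,x,y\in A)\C E$ produces a pure $\C E^{\flat}$ and terms $s,r$, and I set $\C C^{\flat}(X,Y)=(\Pi x,y\in A)(D_{A}(x,y)\to\C E^{\flat}(Xxy,Yxy))$ with $X:=\lambda xy.s$, $Y:=\lambda xy.r$, again recovering $\C C$ by $\beta$.

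The only genuine obstacle, then, is exactly this context-management point in the $\Pi$-cases: the inductive predicate $\C D$ (resp.\ $\C E$) lives in a context extended by the bound variables, so the terms $s,r$ supplied by the hypothesis depend on those variables and do not a priori belong to $\Phi_{0}$. The key observation that makes the induction go through is that $\lambda$-abstracting over precisely the newly bound variables moves $s,r$ into context $\Phi_{0}$, while the $\beta$-rule (together with the projection rules in the product case) restores the dependency inside the pure template. No other subtlety arises, since the induction exhausts all four predicate formation rules and each inductive step produces a bona fide pure predicate.
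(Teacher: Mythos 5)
Your proof is correct and follows essentially the same route as the paper's: a structural induction over the four predicate-formation rules, matching each to the corresponding clause of purity, with $\lambda$-abstraction over the newly bound variables in the two $\Pi$-cases and $\beta$/projection equalities to recover $\C C(\B z)$ from the pure template. The only difference is that you spell out the definitional-equality verifications and the context-management point explicitly, which the paper leaves implicit.
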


For example, the predicate $(z\in A, w\in A) (\Pi x\in B)D_{C}(f(z,x), g(z,w,x))$ is obtained from the pure predicate $(y,y'\in B\to C)(\Pi x\in B)D_{C}(yx,y'x)$ and the functions
$(z\in A, w\in A)\lambda x.f(z,x),\lambda x.g(z,w,x)\in B\to C$.
Lemma \ref{lemma:pure} will play a crucial role in defining models of $\DTT$: we will start by interpreting pure predicates,  and we will obtain the interpretation of all other predicates by a \emph{pullback} operation (corresponding to substitution).

\begin{remark}
In standard ITT, the identity type $I_{A}(\_,\_)$ yields a \emph{groupoid}\footnote{In fact, one obtains a groupoid by considering elements $p\in I_{A}(t,u)$ up to the equivalence induced by 3-dimensional homotopies in $I_{I_{A(t,u)}}(p,q)$.}, i.e.~a category with invertible arrows. In $\DTT$ one can only prove that $D_{A}(\_,\_)$ has the structure of a \emph{deductive system} (i.e.~a \emph{non-associative} category, see \cite{LambekScott}) in which for each arrow $a\in D_{A}(t,u)$ there is a ``transpose'' arrow $a^{*}\in D_{A}(u,t)$, with $\partial(t)^{*}=\partial(t)$.  
\end{remark}

\begin{remark}
In some formulation of intuitionistic type theory (e.g.~see \cite{Jacobs}) one finds a stronger version of the $\eta$-rule, which in the fragment $\DTT$ would read as follows:
\begin{align*}
\AXC{$a\in D_{A}(t,u)$}
\AXC{$(x\in A,y\in A\mid \epsilon \in D_{A}(x,y))$}
\noLine
\UIC{$c(x,y,\epsilon)\in \C C(x,y)$}
\BIC{$
\J(t,u,a,[x]c(x,x,\partial(x)))= c(t,u,a)\in \C C(t,u)$}
\DP
\tag{$\J \eta^{+}$}\label{eta+}
\end{align*}
However, in presence of \eqref{eta+} one can deduce that $a =\partial(t)\in D_{A}(t,t)$\footnote{
This is proved as follows: by letting $c(x,y,\epsilon)=\epsilon$ and $d(x,y,\epsilon)=\partial(x)$, from $c(x,x,\partial(x))=d(x,x,\partial(x))$, we deduce
$a=
c(t,t,a)=
 \J_{\C C}(t,t,a, [x]c(x,x,\partial(x)))=
 \J_{\C C}(t,t,a, [x]d(x,x,\partial(x)))=
 d(t,t,a)=
\partial(t)
$.
} holds for all $t\in A$ and $a \in D_{A}(t,t)$, hence trivializing the interpretation of $D_{A}(x,x)$. Moreover, we can see that the rule also trivializes the interpretation of $\Der f$ as the ``true'' derivative, since it implies
$\Der f= \lambda x. \partial(fx)$\footnote{It suffices to take $c(x,y,\epsilon)=\partial(fx)$.} (i.e.~$\Der f=\lambda x.0$ when $\partial(v)$ is interpreted as the null error).

However, the following instance of \eqref{eta+} is valid in all models we consider:
\begin{align*}
\J (t,u,a, [x]b) = b \qquad (x\notin \mathrm{FV}(b)) \tag{$\J w$}\label{weak}
\end{align*}


\end{remark}


\subparagraph*{Function Extensionality}

%

In ITT the {function extensionality} axiom essentially asserts that from a proof that $f$ and $g$ send identical points into identical points, one can construct a proof that $f$ is identical to $g$. 
In the following sections we will consider models which satisfy two variants of this axiom, namely 
\begin{align*}
 D_{A\to B}(f,g) & \equiv  (\Pi x\in A)D_{B}(f(x),g(x)) \tag{$
\mathsf{FExt1}$}\label{fext1}\\
D_{A\to B}(f,g) & \equiv  (\Pi x,y\in A)(D_{A}(x,y)\to D_{B}(f(x),g(y)))\tag{$
\mathsf{FExt2}$}\label{fext2}
\end{align*}
Axioms \eqref{fext1} (resp.~\eqref{fext2}) says that a difference between two functions $f,g\in A\to B$ is the same as a map from a point $x\in A$ into a difference between $f(x)$ and $g(x)$ in  $B$ (resp.~a map from a difference $\epsilon $ between two points of $A$ into differences between their respective images).
Observe that, without these axioms, one can still construct programs
\begin{align*}
\mathsf E_{1}& \in (\Pi f,g\in A\to B)(D_{A\to B}(f,g)\to (\Pi x\in A)D_{B}(f(x),g(x)))\\
\mathsf{E}_{2} &\in(\Pi f,g\in A\to B)(D_{A\to B}(f,g) \to (\Pi x,y\in A)(D_{A}(x,y)\to D_{B}(f(x),g(y))))
\end{align*}
given by $\mathsf E_{1}= \lambda fg\phi x. \J (f,g,\phi,  [h]\lambda x.\partial(f(x)))$ and $\mathsf E_{2}= \lambda fg\delta xy\epsilon. \J( f,g,\varphi, [h]\Der{h} )$. 

Moreover, some of the models we consider will also satisfy the axiom below 
\begin{equation}
D_{A\times B}(t,u) \equiv D_{A}(\pi_{1}(t),\pi_{1}(u))\times D_{B}(\pi_{2}(t),\pi_{2}(u)) 
\tag{$\mathsf{CExt}$} \label{cext}
\end{equation}
stating that a difference between pairs is a pair of differences. Even without \eqref{cext}, one can  construct terms $\mathsf C_{1},\mathsf C_{2}$ to and from the types above (yet they do not define an isomorphism).

In presence of one or more of the extensionality axioms, it makes sense to consider further computational rules for the operators $\J$ and $\mathsf D$ (for instance, the rule stating that for a higher-order function $f\in A\to B$,  $\partial(f)=\Der f$), that we discuss in the Appendix.

\section{Models of $\DTT$ and Weak Factorization Systems}\label{sec3}


The by now standard semantics of the identity type is based on \emph{weak factorization systems} (in short, WFS). A WFS is a category endowed with two classes of arrows $\C L$ and $\C R$, such that any arrow factorizes as the composition of a $\C L$-arrow and a $\C R$-arrow (the typical example is $\Set$, with $\C L$ being the class of surjective functions and $\C R$ the class of injective functions).

The goal of this section is to introduce a workable notion of model for $\DTT$, as formal basis for the concrete models illustrated in the next sections, and to relate it to the standard WFS semantics of ITT.
We will first present a basic setting, that we call a \emph{$\DTT$-category}, which allows for the interpretation of $\DTT$ (and roughly follows \cite{Garner2012}). We then introduce a slight variant of WFS, that we call \emph{$U$-WFS}, where $U$ is some monoidal functor. This variant is adapted to the ontology of $\DTT$, where one has two distinct families of terms, and only requires that the ($U$-image of the) arrows from the first family factor through the arrows of the second family.
We finally show that any $\DTT$-category gives rise to a $U$-WFS.

The fundamental example of a $\DTT$-category will be the \emph{context category} of $\DTT$, that is, the category $\CTX$ with objects being contexts and arrows $(\B x\in \Phi_{0}\mid \BS\epsilon \in \Phi_{1}(\B x))\to (\B y\in \Psi_{0}\mid \BS\delta\in \Psi_{1}(\B y))$ being sequences $(\B t\mid \B a)$ of ($\beta\eta$-equivalence classes of) terms such that $(\B x\in \Phi_{0})t_{i}\in A_{i}$ and $(\B x\in \Phi_{0}\mid\BS\epsilon\in \Phi_{1}(\B x))a_{j}\in \C C_{j}(\B t)$ holds for all $A_{i}\in\Type$ occurring in $\Psi_{0}$ and $\C C_{j}(\B x)\in \DType$ occurring in $\Psi_{1}$.
We let $\CTX_{0}$ be the full subcategory of $\CTX$ made of contexts of the form $(\B x\in \Phi_{0}\mid )$, and 
$\iota:\CTX_{0}\to \CTX$ indicate the inclusion functor.

When considering the sub-exponential simply typed $\lambda$-calculus ST$\lambda$C$^{!}$ as base language, we let $\CTX_{0}^{!} $ indicate the category of ST$\lambda$C$^{!}$-typed terms and 
$H: \CTX_{0}^{!}\hookrightarrow\CTX_{0}\stackrel{\iota}{\to}\CTX$ indicate the associated embedding inside $\CTX$ (where $\CTX_{0}^{!} \hookrightarrow \CTX_{0}$ corresponds to the ``forgetful'' embedding of ST$\lambda$C$^{!}$ inside ST$\lambda$C - for more details, see the Appendix). 

Observe that the category $\CTX_{0}$ is cartesian closed, while $\CTX_{0}^{!}$ is  symmetric monoidal closed and $\CTX$ is only cartesian. Hence, the basic data to interpret $\DTT$ will be given by a strict monoidal functor $U:\BB C_{0}\to \BB C$ between a symmetric monoidal closed category $\BB C_{0}$ (interpreting either ST$\lambda$C or ST$\lambda$C$^{!}$) and a cartesian category $\BB C$ (interpreting the difference terms). We will use $\Gamma.\Delta$ for the monoidal product of $\BB C_{0}$.

While $\BB C_{0}$ only accounts for simple types, $\BB C$ needs to have enough structure to account for type dependency: for all object $\Gamma$ of $\BB C_{0}$, we consider 
a collection $\D P(\Gamma)$ of \emph{predicates over $\Gamma$} such that,
for all $P\in \D P(\Gamma)$, there exists an object $\Gamma\mid P$ of $\BB C$ and an arrow $\pi_{\Gamma}:\Gamma\mid P \to U\Gamma$ called the \emph{projection} of $P$. We also require that for any predicate $P\in \D P(\Gamma)$ and $f:\Delta\to \Gamma$, the pullback $(Uf)^{\sharp}(\Gamma\mid P)$ exists and is generated by some object $ f^{\sharp}P\in \D P(\Delta)$: 

\adjustbox{center, scale=0.9}{$
\begin{tikzcd}
{\Delta\mid f^{\sharp}P} \ar{d}[left]{\pi_{\Delta}} \ar{rr}{f^{+}} & & {\Gamma\mid P} \ar{d}{\pi_{\Gamma}} \\
  {U\Delta} \ar{rr}[below]{Uf} & &{U\Gamma}
\end{tikzcd}
$}
\medskip

\noindent 
Moreover, we require that the equalities 
$
\mathrm{id}^{\sharp}P=P$, $(g\circ f)^{\sharp}p=g^{\sharp}(f^{\sharp}P)$, $
\mathrm{id}^{+}=\mathrm{id}$, $ (g\circ f)^{+}= g^{+}\circ f^{+}$ all hold.
 In the case of $\CTX$, $\D P(\Phi_{0})$ is the set of predicates $(\B x\in \Phi_{0})\C C(\B x)\in \DType$. Given a predicate $(x,y\in A)\C C(x,y)$ and simply typed terms $(t,u):(\B y\in \Psi_{0}) \to (x,y\in A)$, the pullback $(t,u)^{\sharp}\C C$ corresponds to the predicate $(\B y\in \Psi_{0})\C C(t(\B y), u(\B y))$.

Given predicates $P\in \D P(\Gamma)$ and $Q\in \D P(\Delta)$, we indicate an arrow 
$h\in \BB C(\Gamma \mid P, \Delta \mid Q)$
as $(h_{0}\mid h_{1})$ if $h=h_{0}^{+}\circ h_{1}$, for some $h_{0}\in \BB C_{0}( U\Gamma, U\Delta) $ and  $h_{1}\in \BB C(\Gamma\mid P, \Gamma\mid h_{0}^{\sharp}Q)$ occurring in a commuting diagram as below.

\adjustbox{center, scale=0.9}{$
\begin{tikzcd}
\Gamma \mid P \ar{r}{h_{1}} \ar{d}[left]{\pi_{\Gamma}} & \Gamma \mid h_{0}^{\sharp}Q \ar{d}{\pi_{\Gamma}} \ar{r}{h_{0}^{+}} & \Delta\mid Q \ar{d}{\pi_{\Delta}}\\
U\Gamma \ar[-,double]{r} & U\Gamma  \ar{r}{Uh_{0}}& U\Delta
\end{tikzcd}
$}
\medskip

\noindent In $\CTX$ this precisely says that an arrow $(\B t\mid \B a): (\B x\in \Phi_{0}\mid \BS\epsilon\in \Phi_{1}(\B x))\to (\B y\in \Psi_{0}\mid \BS\delta\in \Psi_{1}(\B y))$ is composed of arrows $\B t\in \Phi_{0}\to  \Psi_{1}$ and 
$\B a\in  \Phi_{1}(\B x)\to \Psi_{1}(\B t(\B x))$. 

To handle the difference types we need to make some further requirements. First, we consider a sub-family $\D P^{\flat}(\Gamma)\subseteq \D P(\Gamma.\Gamma)$ of binary predicates, that we call \emph{pure} predicates, which \emph{generates} the family $\D P(\_)$, in the sense that for all object $\Gamma$ and predicate 
$P\in \D P(\Gamma)$ there exists an object $\Delta$, a pure predicate $P^{\flat}\in \D P^{\flat}(\Delta)\subseteq \D P(\Delta.\Delta)$ and $f\in \BB C_{0}(\Gamma, \Delta.\Delta)$ such that $P= f^{\sharp}(P^{\flat})$. 
In the case of $\CTX$ this is precisely what is asserted by Lemma \ref{lemma:pure}.

For any $\Gamma$, we require a choice of a pure predicate 
 $\Dist{\Gamma}\in \D P^{\flat}(\Gamma)$. The introduction rule requires the existence of an arrow $r_{\Gamma}: U\Gamma \to( \Gamma.\Gamma\mid \Dist\Gamma)$ such that $\pi_{\Gamma.\Gamma}\circ r_{\Gamma} $ coincides with the diagonal $\delta_{U\Gamma}: U\Gamma \to U(\Gamma.\Gamma)=U\Gamma\times U\Gamma$. In $\CTX$ $\Dist{(\Phi_{0})}$ is $(\B x,\B y\in \Phi_{0}\mid \BS\epsilon \in D_{\Phi_{0}}(\B x, \B y))$, where $D_{\Phi_{0}}(\B x, \B y)$ is the list of all $D_{A_{i}}(x_{i},y_{i})$, for $A_{i}$ occurring in $\Phi_{0}$, and $r_{\Phi_{0}}$ is given by $(\B x, \B x\mid \partial(\B x))$ (where $\partial(\B x)=\langle \partial(x_{1}),\dots, \partial(x_{k})\rangle$).
Actually, in order to handle contexts properly, we must consider a slightly more complex condition (see the Appendix).
To handle the elimination rule, 
 for any binary predicate $P=  f^{\sharp}P^{\flat}\in \D P(\Gamma.\Gamma)$ 
%
and commutative diagram

\adjustbox{center, scale=0.9}{$
\begin{tikzcd}
U\Gamma   \ar{d}[left]{r_{\Gamma}} \ar{rr}{c} & & \Gamma. \Gamma \mid P \ar{d}{ \pi_{\Gamma.\Gamma}} \\
\Gamma.\Gamma\mid \Dist{\Gamma} \ar{rr}[below]{  \pi_{\Gamma.\Gamma}} & & U\Gamma\times U\Gamma
\end{tikzcd}
$}
\medskip

\noindent
we require the existence of a diagonal filler $j: (\Gamma.\Gamma\mid \Dist\Gamma) \to (\Gamma.\Gamma\mid P)$ making both triangles commute. 
In $\CTX$, $P=f^{\sharp}P^{\flat}$ is a predicate $(\B x, \B y\in \Phi_{0})\C C(\B x, \B y)= \C C^{\flat}(f_{1}(\B x, \B y), f_{2}(\B x,\B y))$, $c$ is of the form $( \B x, \B x \mid c'(\B x))$, where $c'(\B x)\in \C C(\B x, \B x)$, and a diagonal filler is provided by $j=(\B x, \B y\mid  \J(\B x, \B y, \BS\epsilon, [\B x]c'))$.
%
%
The commutation of the upper triangle $j\circ r_{\Gamma}= c$ coincides then with the $\beta$-rule.
The validity of the $\eta$-rule corresponds to the fact that, when $P=\Dist \Gamma$, $f=\pi_{1}, g=\pi_{2}$ and $c=r_{\Gamma}$, $j$ coincides with the identity arrow $ \mathrm{id}_{\Gamma.\Gamma\mid \Dist \Gamma}$. We will not require the $\eta$-condition in general.
Again, to handle contexts and substitutions properly, we must consider a slightly more complex construction, together with a few coherence conditions for $r_{\Gamma}$ and $j$ (see \cite{Awodey2009, Garner2012}), but we discuss these more technical aspects in the Appendix.

Finally, we must require that $\BB C$ has enough structure to interpret the dependent products present in the fragment $\DTT$; we describe this structure in the Appendix.

%

%

We let a \emph{$\DTT$-category} be a strict monoidal functor $U:\BB C_{0}\to \BB C$ together with collections of predicates $\D P(\_), \D P^{\flat}(\_)$ and of difference structures $(\Dist{\_}, r_{\_}, j_{\_, \_,\_})$ satisfying the properties above. 
The following proposition assures that one can interpret $\DTT$ in any $\DTT$-category.
\begin{proposition}\label{prop:interpretation}
For any $\DTT$-category $U:\BB C_{0}\to \BB C$, if $\BB C_{0}$ is cartesian closed, any map $m$ from base type variables to $ \mathrm{Ob}(\BB C_{0})$ extends into 
functors $\model{\_}_{m}: \CTX_{0}\to \BB C_{0}$ and $\nudel{\_}_{m}:\CTX\to \BB C$, satisfying $U\circ \model{\_}_{m}= \nudel{\_}_{m}\circ \iota$, and 
%
%
%
 preserving all relevant structure. 
 If $\BB C_{0}$ is symmetric monoidal closed, the same holds with $\CTX_{0}$ replaced by $\CTX_{0}^{!}$ and $\iota$ replaced by $H$.
\end{proposition}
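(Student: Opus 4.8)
The plan is to construct both functors by structural induction on the syntax and then verify that they respect the equational theory of $\DTT$ and preserve the relevant categorical structure. Since $\CTX_{0}$ (resp.~$\CTX_{0}^{!}$) is the free cartesian closed (resp.~symmetric monoidal closed) category over the base types, the map $m$ determines $\model{\_}_m$ by the standard categorical semantics of the simply typed $\lambda$-calculus: a base type goes to its $m$-image, $A\times B$ to $\model A_m\times\model B_m$, $A\to B$ to the exponential $\model B_m^{\model A_m}$ (resp.~to the monoidal closed structure of $\BB C_{0}$ in the sub-exponential case, with the comma $\Gamma.\Delta$ interpreted by $\otimes$), contexts to the corresponding (monoidal) products, and program terms to the evident composites of structural, evaluation and pairing maps. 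This part is routine and uniquely determined by the universal property of the free (monoidal) closed category, so I would simply invoke it.

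The core of the argument is the definition of $\nudel{\_}_m$ on $\CTX$. On an object $(\B x\in\Phi_{0}\mid\BS\epsilon\in\Phi_{1})$ I would first interpret the simple part as $U\model{\Phi_{0}}_m$ and then iterate predicate extension: by Lemma~\ref{lemma:pure} each declaration $\C C_i(\B x)\in\DType$ is of the form $f^{\sharp}(P^{\flat})$ for a pure predicate $P^{\flat}$ and a simply typed substitution $f$, hence determines an object of $\D P(\model{\Phi_{0}}_m)$ through the chosen pure predicates and the pullback operation $(\_)^{\sharp}$; the object $\nudel{(\B x\in\Phi_{0}\mid\BS\epsilon\in\Phi_{1})}_m$ is then the total space of these predicates over $U\model{\Phi_{0}}_m$. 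On an arrow $(\B t\mid\B a)$ I would use the decomposition $h=h_{0}^{+}\circ h_{1}$ from the definition of a $\DTT$-category, setting $h_{0}=\model{\B t}_m$ and taking $h_{1}$ to be the interpretation of the difference terms $\B a$ built below. The equation $U\circ\model{\_}_m=\nudel{\_}_m\circ\iota$ is then immediate: objects in the image of $\iota$ carry no predicate declarations, so $\nudel{\Phi_{0}}_m=U\model{\Phi_{0}}_m$, and program terms receive the same interpretation on both sides by construction.

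Difference terms are interpreted by a second structural induction using exactly the difference structure of the $\DTT$-category: a difference variable becomes a projection out of the total space; pairing and projection use the products of $\BB C$; the $\Pi$-abstractions and applications use the dependent-product structure required of $\BB C$; $\partial(t)$ is interpreted through the introduction map $r_{\Gamma}$ precomposed with $\model t_m$; and $\J(t,u,a,[x]b)$ is interpreted by applying the diagonal filler $j$ to the commuting square whose top edge $c$ is the interpretation of $[x]b$ and whose remaining datum is the interpretation of $a$ viewed as a map into $\Gamma.\Gamma\mid\Dist\Gamma$. One then checks that each equational rule of Fig.~\ref{fig:rules} is validated by the corresponding structural equation: the ($\beta$) rule by the upper-triangle commutation $j\circ r_{\Gamma}=c$, the ($\eta$) rule (when the model validates it) by the identity-filler condition, and the program-level $\beta\eta$-equations by the (monoidal) closed functoriality of $\model{\_}_m$. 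Consequently the interpretation descends to equivalence classes and is functorial.

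The main obstacle, as usual for the identity type (cf.~\cite{Awodey2009,Garner2012}), is coherence and stability. I must show that $\nudel{\_}_m$ is well defined independently of the presentation $P=f^{\sharp}(P^{\flat})$ supplied by Lemma~\ref{lemma:pure}, and that the interpretations of $\partial$ and $\J$ are \emph{stable under substitution}, i.e.~commute with the pullback operations $(\_)^{\sharp}$ and $(\_)^{+}$. This is precisely where the coherence conditions on $r_{\Gamma}$ and $j$ (the Beck--Chevalley-type equations deferred to the Appendix, together with the functoriality equations $\mathrm{id}^{\sharp}P=P$, $(g\circ f)^{\sharp}P=g^{\sharp}(f^{\sharp}P)$, $\mathrm{id}^{+}=\mathrm{id}$, $(g\circ f)^{+}=g^{+}\circ f^{+}$) are needed, and verifying that they suffice to make the two mutually recursive interpretations compatible is the delicate, bookkeeping-heavy step. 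The symmetric monoidal case adds no genuinely new difficulty: only the clause for $\model{\_}_m$ changes (monoidal rather than cartesian closure on $\CTX_{0}^{!}$), while $\nudel{\_}_m$ uses $\BB C_{0}$ solely through the monoidal functor $U$, so the same construction carries over with $\iota$ replaced by $H$.
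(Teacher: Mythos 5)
Your proposal is correct and follows essentially the same route as the paper's own proof: the simply typed part is interpreted standardly, predicates are handled by interpreting pure predicates through the $\DTT$-category structure and extending to all predicates by pullback via Lemma~\ref{lemma:pure}, and difference terms are interpreted inductively with $\partial(t)$ sent to $r_{\Gamma}$ precomposed with $\model{t}_{m}$ and $\J$ sent to the chosen diagonal filler $j$. Your explicit flagging of stability under substitution (the coherence conditions on $r$ and $j$) as the delicate bookkeeping step is exactly the role those Appendix conditions play in the paper.
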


To conclude our general presentation of the semantics of $\DTT$, we show how it relates to WFS. 
We recall that, given a category $\BB C$ and two arrows $f\in \BB C(A, B)$ and $g\in \BB C(C, D)$, $f$ is said to have the \emph{left-lifting property} with respect to $g$ (and $g$ is said to have the \emph{right lifting property} with respect to $f$), if for every commutative diagram

\adjustbox{center, scale=0.9}{$
\begin{tikzcd}
A \ar{d}[left]{f} \ar{rr}[above]{h} &  &C \ar{d}[right]{g} \\
B \ar{rr}[below]{k} & & D
\end{tikzcd}
$}
\medskip

\noindent 
there exists a diagonal filler $j\in \BB C(B,C)$ making both triangles commute.
Given a set $\D S$ of arrows in a category, we let $\D S^{\tri}$ (resp $^{\tri}\D S$) indicate the set of arrows $g$ such that any arrow in $\D S$ has the left (resp.~right) lifting property with respect to $g$. 
The following notion generalizes usual WFS:
\begin{definition}
Let $U:\BB C\to \BB D$ be a functor. An \emph{$U$-weak factorization system} (in short, $U$-WFS) \emph{for $\BB C$ inside $\BB D$} is a pair of classes of maps $(\C L, \C R)$ of $\BB D$ such that (1) for every morphism $f$ of $\BB C$, $Uf=p_{f}\circ i_{f}$, with $i_{f}\in \C L$ and $p_{f}\in \C R$, and (2) $\C L^{\tri}=\C R$ and $\C L= ^{\tri}\C R$.

\end{definition} 

Observe that a weak factorization system in the usual sense is just a $\mathrm{Id}$-WFS.

When a functor $U:\BB C_{0}\to \BB C$ yields a $\DTT$-category, it is possible to construct a $U$-WFS $(\C L_{\D P}, \C R_{\D P})$ for $\BB C_{0}$ inside $\BB C$ by letting $\C L_{\D P}= ^{\tri}\D P^{*}$ and $\C R_{\D P}=\C L^{\tri}$, where $\D P^{*}$ is made of all arrows obtained by composing the arrows $\pi_{\Gamma}: \Gamma\mid P\to U\Gamma\times U\Gamma$, for all $P\in \D P^{\flat}(\Gamma)$ with projections in $\BB C$. 
The $U$-factorization of an arrow $f: \Gamma \to \Delta$ in $\BB C_{0}$ is given by $p_{f}\circ i_{f}$, where $p_{f}=\pi_{2}\circ \pi_{\Gamma.\Delta}: \Gamma.\Delta\mid \Delta_{f}\to \Delta$, $\Delta_{f}$ is a suitable pullback, and $i_{f}$ is the arrow obtained by the universality of pullback in the diagram below:
\begin{center}
\adjustbox{scale=0.9}{
$
\begin{tikzcd}
U\Gamma\ar[bend right]{ddr}[below]{\langle U\Gamma,Uf\rangle}\ar[dashed]{rd}{r_{f}} \ar{rr}{Uf} & &U \Delta \ar{d}{r_{\Delta}} \\
 & \Gamma.\Delta\mid \Delta_{f} \ar{d}{\pi_{\Gamma.\Delta}}
 \ar{r}{} & \Delta.\Delta\mid \Dist\Delta \ar{d}{\pi_{\Delta.\Delta}}\\
&U \Gamma\times U\Delta \ar{r}{Uf.U\Delta} & U\Delta\times  U\Delta
\end{tikzcd}
$}
\end{center}
where $\Delta_{f}=(f.\Delta)^{\sharp}(\Dist \Delta)$.
To show that $r_{f}\in \C L_{\D P}$ we must rely on the difference structure in an essential way: the required diagonal filler is obtained by an arrow of the form $j$. We describe this construction (which follows the argument from \cite{Gambino2008}) in the Appendix.

\begin{theorem}\label{thm:wfs}
For any $\DTT$-category $U:\BB C_{0}\to \BB C$, with collections of predicates $\D P(\_), \D P^{\flat}(\_)$, the pair $(\C L_{\D P}, \C R_{\D P})$ forms a $U$-WFS of $\BB C_{0}$ inside $\BB C$.
\end{theorem}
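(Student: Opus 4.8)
The plan is to verify the two defining clauses of a $U$-WFS separately: clause (2), that the two classes are mutually determined by lifting, which is essentially formal; and clause (1), the factorization, whose substantive content is the membership $i_f\in\C L_{\D P}$. The factorization data itself is already exhibited in the excerpt, so the work is in checking that the two maps land in the correct classes.

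I would dispatch clause (2) using only the abstract behaviour of the lifting operators. Writing $\lambda(\D S)={}^{\tri}\D S$ and $\rho(\D S)=\D S^{\tri}$, the biimplication ``$\D S\subseteq{}^{\tri}\D T$ iff $\D T\subseteq\D S^{\tri}$'' (both sides asserting that every $\D S$-arrow lifts on the left against every $\D T$-arrow) exhibits $\lambda,\rho$ as an antitone Galois connection, so that the standard identity $\lambda\rho\lambda=\lambda$ holds. Since $\C R_{\D P}$ is \emph{defined} as $\C L_{\D P}^{\tri}$, the equation $\C L_{\D P}^{\tri}=\C R_{\D P}$ is immediate, while
\[
{}^{\tri}\C R_{\D P}={}^{\tri}\bigl(({}^{\tri}\D P^{*})^{\tri}\bigr)=\lambda\rho\lambda(\D P^{*})=\lambda(\D P^{*})={}^{\tri}\D P^{*}=\C L_{\D P}
\]
yields the remaining equation $\C L_{\D P}={}^{\tri}\C R_{\D P}$. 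This uses nothing about the difference structure.

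For clause (1) I would take $Uf=p_f\circ i_f$ with $i_f=r_f$ induced by the universal property of the pullback $\Delta_f=(f.\Delta)^{\sharp}(\Dist\Delta)$ and $p_f=\pi_2\circ\pi_{\Gamma.\Delta}$, exactly as displayed. Commutativity is a one-line chase: the defining property of $r_f$ gives $\pi_{\Gamma.\Delta}\circ r_f=\langle\mathrm{id}_{U\Gamma},Uf\rangle$, whence $p_f\circ r_f=\pi_2\circ\langle\mathrm{id}_{U\Gamma},Uf\rangle=Uf$. To place $p_f$ in $\C R_{\D P}$ I would invoke the general closure of the right class of a lifting pair under pullback and composition, together with the unit of the Galois connection $\D P^{*}\subseteq({}^{\tri}\D P^{*})^{\tri}=\C R_{\D P}$, which puts every pure-predicate projection in $\C R_{\D P}$; since $\pi_{\Gamma.\Delta}$ is the pullback of the pure projection $\pi_{\Delta.\Delta}$ along $Uf.U\Delta$ and $p_f$ is obtained from it by composition with the projection of $\BB C$ provided for in the definition of $\D P^{*}$, the membership $p_f\in\C R_{\D P}$ follows. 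I regard this half as routine.

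The main obstacle is the remaining membership $i_f=r_f\in\C L_{\D P}={}^{\tri}\D P^{*}$, i.e.\ that $r_f$ has the left-lifting property against every pure-predicate projection. This is exactly where the difference structure is indispensable: given a lifting square with $r_f$ on the left and some $\pi_{\Gamma'.\Gamma'}\colon\Gamma'.\Gamma'\mid P^{\flat}\to U\Gamma'\times U\Gamma'$ on the right, the diagonal filler must be produced by an arrow of the form $j$ furnished by the elimination structure, with the $\beta$-rule forcing commutation of the upper triangle and the projection condition on $j$ that of the lower one. The delicate point — which I would treat following the argument of \cite{Gambino2008}, and defer (as the excerpt does) to the more careful formulation handling contexts — is that $r_f$ arises by reindexing the reflexivity arrow $r_\Delta$ along $Uf$ through the pullback defining $\Gamma.\Delta\mid\Delta_f$; since the left-lifting property is stable under pushout but \emph{not} under pullback, establishing $r_f\in\C L_{\D P}$ cannot be read off formally and instead requires the coherence and stability conditions imposed on the chosen $r_{\_}$ and $j_{\_,\_,\_}$ under reindexing. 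I expect this bookkeeping, rather than any single filler construction, to be where the real difficulty lies.
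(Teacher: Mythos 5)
Your disposal of clause (2) is correct and is in fact cleaner than what the paper does. The paper obtains ${}^{\tri}\C R_{\D P}\subseteq \C L_{\D P}$ the same way you do (from $\D P^{*}\subseteq \C R_{\D P}$ and antitonicity), but for the converse inclusion $\C L_{\D P}\subseteq{}^{\tri}\C R_{\D P}$ it invokes Hovey's Retract Argument, factoring an \emph{arbitrary} $g\in\C R_{\D P}$ as $p_{g}\circ i_{g}$ --- a step that the $U$-relative factorization does not actually license, since clause (1) only factors arrows in the image of $U$. Your Galois-connection identity $\lambda\rho\lambda=\lambda$ (indeed, the inclusion is immediate from the very definition $\C R_{\D P}=\C L_{\D P}^{\tri}$) gets the same conclusion with no factorization at all, so your route both simplifies and repairs this part of the paper's argument. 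Your treatment of $p_{f}\in\C R_{\D P}$ (unit of the connection plus closure of the right class under pullback and composition) likewise matches, and is more scrupulous than, the paper's one-line assertion.

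The genuine gap is the membership $i_{f}\in\C L_{\D P}$, which you explicitly defer: this is not a detail the paper postpones, it is the entire substance of its proof, carried out in the Appendix. Concretely, what is missing is: (a) the reduction step --- a lifting square of $i_{f}$ against the projection $\pi_{\Sigma}$ of an arbitrary $P\in\D P(\Sigma)$ is replaced, by pulling $P$ back along the bottom arrow $h$, by a square against $\pi_{\Gamma.\Delta}$ for $Q=h^{\sharp}P$, whose bottom arrow is the projection itself; and (b) the explicit construction of the filler for that reduced square, namely the composite of the arrow $\delta_{\Delta}^{+}.(\delta_{\Delta}^{+}\circ c).r_{\Gamma.\Delta}\colon \Gamma.\Delta\mid\Delta_{f}\to\Gamma.\Delta.\Delta\mid\pi_{\Gamma.\Delta}^{\sharp}(\Delta_{f}).\pi_{\Gamma.\Delta}^{\sharp}(Q).\Dist{\Delta}$, the \emph{parametrized} elimination filler $j$ taken over the extended context $\Gamma.\Delta.\Delta$, and the comparison map $(\pi_{\Gamma.\Delta}^{+}\circ\pi_{2}).(\pi_{\Gamma.\Delta}^{+}\circ\pi_{3})$ back down to $\Gamma.\Delta\mid \Delta_{f}.Q$, followed by the verification of both triangles via the $\beta$-condition and the coherence axioms. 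Relatedly, your diagnosis of where the difficulty lies is misplaced: the pullback-instability worry (that $r_{f}$ is only a reindexing of $r_{\Delta}$, and LLP is not pullback-stable) is resolved not by ``bookkeeping'' that transports a filler for $r_{\Delta}$, but by the fact that the $\DTT$-category axioms already provide $r_{\Delta,\Gamma\mid Q}$ and $j_{\Delta,\Gamma,Q,f,P,c}$ in parametrized form, relative to an ambient context and an arbitrary predicate parameter $Q$; this is precisely what lets one build a filler for $r_{f}$-squares over $\Gamma.\Delta$ directly. The real work is therefore exactly ``a single filler construction'' --- the composite above --- and until it is exhibited and checked, clause (1), and hence the theorem, remains unproven in your proposal.
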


\section{Metric Preservation}\label{sec7}
%
%
%

We start our parade of models of $\DTT$ by considering metric models focusing on program \emph{sensitivity}. 
In several situations it is important to know that a program is not too sensitive to small changes in the input. A key example is differential privacy: if $f: \mathrm{db}\to \Real$ is a program 
producing some aggregated information from some database $\mathrm{db}$ (e.g.~$f$ outputs the percentage of LGBTIQ+ people 
among the students of a given university), we wish the result of $f$ not to depend too much on any single item of $\mathrm{db}$, so that information about single individuals cannot be leaked from the outputs of $f$.

A standard way to capture sensitivity is through the \emph{Lipschitz}-condition: a function $f$ between metric spaces $(X,a)$ and $(Y,b)$ is $r$-Lipschitz, for some positive real $r$, when it satisfies $b(f(x),f(y))\leq r\cdot a(x,y)$ for all $x,y\in X$. It is well-known that from a $r$-Lipschitz function $f:\mathrm{db}\to \Real$ one can obtain, by adding \emph{Laplace-distributed noise}, a randomized function $f^{*}:\mathrm{db}\to \Real$ which is $r\epsilon$-differentially private.\footnote{Formally, this means that for all two inputs $x,x'\in \mathrm{db}$ that differ by at most one parameter and for all $S\subseteq \BB R$, the probability $P[f^{*}(x)\in S]$ that $f^{*}(x)\in S$ is bounded by $e^{r\epsilon}\cdot P[f^{*}(x')\in S]$.}

The type system $\mathrm{Fuzz}$ \cite{Reed_2010} was designed to ensure that well-typed programs correspond to Lipschitz functions. Formally, it is a variant of \emph{bounded linear logic} \cite{Girard92}, i.e.~an affine simply typed $\lambda$-calculus with a bounded exponential $!_{r}A$, where a program $f\in \ !_{r}A\multimap B$ corresponds to a $r$-Lipschitz function. We consider here a basic fragment ST$\lambda$C$^{!}$ of $\mathrm{Fuzz}$ (described in the Appendix).

$\mathrm{Fuzz}$ admits a natural and simple semantics in the symmetric monoidal closed category $\Met$ of metric spaces and \emph{non-expansive} (i.e.~1-Lipschitz) maps (with monoidal product $(X.a)\otimes(Y.b)= (X\times Y, a+b)$). In particular, the bounded exponential $!_{r}A$ is interpreted as the \emph{re-scaling} $rX$ of a metric space $(X,a)$ (i.e.~with $ra(x,y)=r\cdot a(x,y)$), so that a non-expansive map from $rX$ to $Y$ is the same as a $r$-Lipschitz function from $X$ to $Y$.

We construct a model of a variant of $\DTT$ where we take program terms to be ST$\lambda$C$^{!}$-typable terms. The resulting semantics will associate each simple type $A$ with some metric space, and will interpret $D_{A}(t,u)$ as the set of positive reals greater than the distance between $t$ and $u$. 
Any difference term $a$ is interpreted by a function yielding positive real numbers in output; in particular the self-differences $\partial(t)$ correspond to $0$, and for any non-expansive function $f$,  $\Der{f}$ will correspond to the  
map $x,y,\epsilon\mapsto   \epsilon$, ensuring metric preservation.

We provide a sketch of the $\DTT$-category structure of the forgetful functor $U:\Met \to \Set$, described in more details in the Appendix. For any metric space $(X,a)$, the pure predicates $\D P^{\flat}(X)$ are all (pseudo-)metric spaces $P=(X,b)$ over $X$, with $(X\times X\mid P)=\coprod_{x,y\in X}\widetilde b(x,y)$, where $\widetilde b(x,y)=\{r\mid b(x,y)\leq r\}$, with projection $\pi_{X}:(X\otimes X\mid P)\to X\times X$ given by $\pi_{X}(\langle \langle x,y\rangle, s\rangle)=\langle x,y\rangle$.\footnote{Actually, to handle the higher-order structure, we must consider \emph{parameterized} (pseudo-)metric spaces over $X$, see the Appendix.}
%
%
Moreover, for any metric space $(Y,b)$, $\D P(Y)$ is made of pullbacks  $\langle f,g\rangle^{\sharp}P$, where $P=(X,a)\in \D P^{\flat}(X)$ and $\langle f,g\rangle\in \Met (Y, X\otimes X)$, where $(Y\mid \langle f,g\rangle^{\sharp}P)=\coprod_{y\in Y}a(f(y),g(y))$, with projection $\pi_{Y}:(Y\mid \langle f,g\rangle^{\sharp}P)\to Y$ given by $\pi_{Y}(\langle y,s\rangle)=y$.

For any metric space $(X,a)$, the pure predicate $\Dist X\in \D P^{\flat}(X)$ is $(X,a)$ itself, with 
$r_{X}: X\to (X\otimes X\mid \Dist X)$ given by  
  $r_{X}(x)=\langle \langle x,x\rangle, 0\rangle$. For any binary predicate $P=\langle f,g\rangle^{\sharp}P^{\flat}\in \D P(X\otimes X)$ (with  
$ (X\otimes X\mid P)=  \coprod_{x,y\in X} \widetilde b(f(x,y),g(x,y))$, for some metric space $(Y,b)$ and $\langle f,g\rangle \in \Met(X\otimes X, Y\otimes Y)$),  %
and for any function $c: X\to
(X\otimes X\mid P)$ 
with $c(x)=\langle \langle x,x\rangle, c'(x)\rangle$, we can define a diagonal filler $j: (X\otimes X\mid \Dist X) \to (X\otimes X\mid P)$ by  
\begin{center}
$
j(\langle \langle x,y\rangle,r\rangle)=\langle \langle x,y\rangle,
  c'(x)+r\rangle$
\end{center}
In fact, from
from $\langle f,g\rangle \in \Met(X\otimes X, Y\otimes Y)$ we deduce $b(f(x,y),f(x',y'))+b(g(x,y),g(x',y'))\leq a(x,x')+b(y,y')$. Hence, from 
 $c'(x)\in \widetilde b(f(x,x), g(x,x))$ and $b(f(x,y),f(x,x))+b(g(x,x),g(x,y))\leq a(x,x)+a(x,y)\leq r$, 
%
 we deduce $(j(\langle \langle x,y\rangle,r\rangle))_{2}\in \widetilde b(f(x,y),g(x,y))$ by the triangular law.  
The validity of the $\beta$-rule follows from $((j\circ r_{X})(x))_{2}=c'(x)$; moreover, when $f(x,y)=x$, $g(x,y)=y$ and $c'(x)=0$, $j(\langle \langle x,y\rangle,\epsilon\rangle)=\langle \langle x,y\rangle,\epsilon\rangle$, so the
  $\eta$-rule is also valid.

\begin{remark}
When $h\in \Met(rX,Y)$ interprets some $r$-Lipschitz program $t\in \ !_{r}A\multimap B$, the predicate $D_{B}(tx,ty)$ corresponds to the pullback $\langle h\circ \pi_{1},h\circ \pi_{2}\rangle^{\sharp}(X,a)
\in \D P(rX\otimes rX)$, and the derivative $\Der t$ is interpreted then by the map $x,y,\epsilon \mapsto \epsilon$, as desired.
\end{remark}
%
\begin{remark}
From Theorem \ref{thm:wfs} it follows  that given metric spaces $(X,a)$ and $(Y,b)$, any $f\in \Met(r X, Y)$ factors as $X \stackrel{i_{f}}{\to} \coprod_{x\in X, y\in Y}\widetilde b(f(x),y)\stackrel{p_{f}}{\to} Y$, where 
 $i_{f}(x)=\langle \langle x, f(x)\rangle, 0\rangle$.
\end{remark}
\begin{example}
 $\DTT$ can be used to formalize meta-theoretical reasoning about $\mathrm{Fuzz}$ as discussed in \cite{Reed_2010, Gaboardi2017}. For instance, we might extend simple types with the \emph{probability monad} $\Box A$, adding  suitable primitives. Then, by interpreting the type $D_{\Box A}(t,u)$ with the metric $ d(\delta_{1},\delta_{2})=\frac{1}{\epsilon}\cdot \left(\sup_{x\in A}\left | \ln \left (\frac{\delta_{1}(x)}{\delta_{2}(x)}\right) \right |\right)$ (with $\delta_{1},\delta_{2}$ distributions over $A$), for any randomized program $f\in \  !_{r}A\to \Box B$, the statement
$\Der f \in (\Pi x,y\in A)(D_{A}(x,y)\to D_{\Box B}(f(x),f(y)))$ expresses that
for all $x,y\in A$, $\epsilon \geq d(x,y)$ and $b\in B$, $\left |P[f(x)=b]- P[f(y)=b]\right | \leq e^{r\epsilon}$, that is, that
 $f$ is a $r\epsilon$-differentially private function.
\end{example}

\begin{example}\label{example:2}
Suppose $\TT H(f)$ computes the average of the simulations $ f(0),\dots,  f(N)$, i.e.~${\TT H}(f)= \frac{1}{N+1}\cdot \sum_{i=0}^{N}f(i)$; note that $\widetilde{\TT H}(\vec x)=\frac{1}{N+1}\cdot \sum_{i=0}^{N}x_{i}: \Real^{N+1} \to \Real$ is $\frac{1}{N+1}$-Lipschitz, and thus $\Der{\widetilde{\TT H}}([x,y]_{i=0,\dots, N})=\frac{1}{N+1}\cdot \sum_{i}y_{i}$. We deduce then that for all $r$-Lipschitz function $f\in \ !_{r}\Nat\multimap \Real$, the perforation error $b\in D_{\Real}(\TT H( f), \TT H(f^{*}))$ computed in Example \ref{example:1} corresponds to 
$ \frac{1}{N+1}\cdot \sum_{i=0}^{\lfloor N/2\rfloor}r\cdot\Der{f}(2i,1)= \frac{ \lfloor N/2\rfloor+1}{N+1} \cdot r$.

\end{example}




%
%
%
%
%
%
%
%
%
%
%
%
%
%
%
%
%
%

\section{Differential Logical Relations}\label{sec4}

When studying approximate program transformations like loop perforation, the Lipschitz condition is often too restrictive. In fact, even basic operations of the simply typed $\lambda$-calculus can make this property fail: while the binary function $f(k,x)=k\cdot x: \Real^{2}\to \Real$ is $|k|$-Lipschitz in $x$ for all $k\in \BB R$, the unary function $g(x)=x^{2}$ obtained by ``contracting'' the variables $k$ and $x$ already fails to be Lipschitz. In fact, the distance between $g(x)$ and $g(x+\epsilon)$ is bounded by $2|x|\epsilon+\epsilon^{2}$, hence not proportional to $\epsilon$.
Indeed, this kind of issues is due to the fact that $\Met$ is \emph{not} a cartesian closed category, that is, a model of full ST$\lambda$C, but only of its sub-exponential variant $\mathrm{Fuzz}$.

%
%
%

The theory of \emph{differential logical relations} \cite{dallago,dallago2, LICS2021} (in short, DLR), has been developed to overcome this kind of problems when investigating approximate transformations in ST$\lambda$C.
A DLR is a ternary relation $\rho\subseteq X\times L\times X$ relating the elements of some set $X$ with the values of some complete lattice $L$ of ``errors over $X$''; intuitively, $\rho(x,\epsilon,y)$ is to be read as the fact that the error of replacing $x$ by $y$ is bounded by $\epsilon$. As the name suggests, DLR generalize usual logical relations, which can be seen as DLR where $L$ is the Boolean lattice $\{0<1\}$.
Yet, due to the arbitrary choice of $L$, a distance between two programs needs not be a Boolean nor a positive real (as in metric semantics); typically, a distance between two functional programs is itself a function, tracking distances in input into distances in output.

Since $L$ is a complete lattice, for all $x,y\in X$, one can define a distance function
$\| \_,\_\|_{\rho}: X\times X\to L$ where $\| x,y\|= \inf\widetilde\rho(x,y)$, with $\widetilde\rho(x,y)=\{\epsilon \in L\mid \exists \delta \leq \epsilon \text{ s.t. }\rho(x,\delta,y)\}$.  For instance, if we consider the DLR
$(\BB R, \BB R_{\geq 0}^{\infty},\rho_{\mathrm{Euc}})$, where $\rho_{\mathrm{Euc}}(r,v,s)$ holds iff $v\geq |r-s|$, the associated distance function is the Euclidean metric. 
However, $\|x,y\|_{\rho}$ needs not be a metric in general: first of all, the self-distances $\| x,x\|_{\rho}$ (that we note simply as $ \| x\|_{\rho}$) need not be zero (i.e.~the bottom element of $L$); moreover, $\| \_,\_\|_{\rho}$ needs not satisfy the usual \emph{triangular law} of metric spaces (for a detailed comparison between DLR and - generalized \cite{Stubbe2014} - metric spaces, see \cite{LICS2021}).
Here we will restrict our attention to \emph{separated} DLR, i.e.~such that $\| i\|_{\rho}\in\widetilde\rho(i,j)$ (or $\|j\|_{\rho}\in \widetilde\rho(i,j)$) implies $i=j$. 

%



A map of DLR $(X,L,\rho)$ and $(Y,M,\mu)$ is given by a function $f: X\to Y$ (hence no Lipschitz or other continuity conditions are asked) together with an auxiliary map $\varphi: X\times X\times  L\to Y$ which, intuitively, tracks errors in input into errors in output; more formally, $\varphi$ must satisfy, for all $ x,y\in X$ and $ \epsilon\in L$, that if $ \rho(x,\epsilon,y)$ holds, then both $\mu(f(x), \varphi(x,y,\epsilon), f(y))$ and $\mu(f(x), \varphi(x,y,\epsilon), f(y))
$ also hold.
(Separated) DLR and their maps form a cartesian closed category $\DLR$ (see the Appendix, and \cite{dallago, LICS2021} for further details).

The presence of the auxiliary map $\varphi$ is what ensures the ``transport'' of errors: if $\rho(t,\epsilon,u)$ holds for some $t,u$ of type $A$ and the context $\TT C[\ ]:A\to B$ admits an auxiliary map $\varphi$, then  $\varphi(t,u,\epsilon)$ provides an error bound between $\TT C[t]$ and $\TT C[u]$.
For example, to the function $g(x)=x^{2}$ one can associate the auxiliary map $\varphi_{g}(x,y,\epsilon)= 2|x|\epsilon+\epsilon^{2}$. 

%

%
%


To model $\DTT$ in terms of DLR we will interpret $D_{A}(t,u)$ as the set $\widetilde\rho(t,u)$ of differences between $t$ and $u$; hence, the self-differences $\partial(t)$ will correspond to $\|t\|_{\rho}$, and $\Der{f}$ will provide each program $f$ with the auxiliary map
$x,y,\epsilon \mapsto \sup\{ \| f(x),f(z)\|_{\mu}\mid z\in X\land \rho(x,\epsilon,z)\}$.
Moreover, due to the higher-order structure of DLR (recalled in the Appendix) the DLR models satisfies the extensionality axioms \eqref{cext} and \eqref{fext2}, as well as the equational rule $\partial(f)=\Der f$, for $f$ a higher-order function (see \cite{LICS2021}, Lemma IV.1). 

%

We provide a sketch of the $\DTT$-category structure of the forgetful functor $U:\DLR\to \Set$ (given by $U(X,L,\rho)=X$ and $U(f,\varphi)=f$), described in detail in the Appendix. 
For any DLR $(X,L,\rho)$, a pure predicates $P\in\D P^{\flat}(X)$ is just a DLR $(X,L,\rho)$, with $(X\times X\mid P)=\coprod_{x,y\in X}\widetilde\rho(x,y) $ and projection $\pi_{X\times X}:(X\times X\mid P)\to X\times X$. 
For any set $Y$, $\D P(Y)$ is made of pullbacks $\langle f,g\rangle^{\sharp}Q$, where $Q=(X,L,\rho)\in \D P^{\flat}(X)$ and $f,g:Y\to X$, with 
$(Y\mid\langle f,g\rangle^{\sharp}Q)=\coprod_{y\in Y}\widetilde\rho(f(y),g(y))$ and associated projection
$\pi_{Y}:(Y\mid\langle f,g\rangle^{\sharp}Q)\to Y$.

For any separated DLR $(X,L,\rho)$, $\Dist X\in \D P^{\flat}(X)$ is $(X,L,\rho)$ itself, with $r_{X}(x)=\langle \langle x,x\rangle, \|x\|_{\rho}\rangle$;
moreover, for any binary predicate $P=\langle f,g\rangle^{\sharp}P^{\flat}\in \D P(X\times X)$ (with $(X\times X\mid P)=\coprod_{x,y\in X}\widetilde\mu(f(x,y),g(x,y))$) and function $c(x)=\langle \langle x,x\rangle, c'(x)\rangle: X\to (X\times X\mid P)$, so that $c' (x)\in \widetilde\mu(f(x,x),g(x,x))$, we can define a diagonal filler $j: (X\times X\mid \Dist X)\to  (X\times X\mid P)$ by 
\begin{center}
\adjustbox{scale=0.95,center}{
$
(j(\langle\langle x,y\rangle,\epsilon\rangle))_{2}=
\sup\{
c'(x), \inf\widetilde\mu(f(x,z),g(x,z))\mid z\in X \land 
\rho(x,\epsilon,z)
\}\in \widetilde \mu(f(x,y),g(x,y))
$}
\end{center}
\noindent The $\beta$-rule  
$j(\langle\langle x,x\rangle,\langle \| x\|_{\rho}\rangle)=c(x)$ follows from the fact that $\rho$ is separated. The $\eta$-rule holds only if $\rho$ is \emph{complete}, i.e.~for all $x\in X$ and $\epsilon \in L$, $\sup\{\inf\widetilde\rho(x,y)\mid \rho(x,\epsilon,y)\}=\epsilon$.

%

%


%
%

\begin{remark}
When $h:X\to Y$ interprets a program $t \in A\to B$, the 
predicate $D_{B}(fx,fy)$ corresponds to the pullback of $D_{B}(x,y)$ along $f(x,y)=h(x)$ and  $g(x,y)=h(y)$; moreover, since $\lambda x.\partial (fx)$ is the function $c'(x)=\| h(x)\|_{\mu}$,  $\Der t$ yields then the map $x,y,\epsilon \mapsto  
\sup\{\|h(x), h(z)\|_{\mu}\mid z\in X\land \rho(x,\epsilon,z)   \} $, as desired. 
\end{remark}
\begin{remark}
From Theorem \ref{thm:wfs} it follows that  given DLR $(X,L,\rho)$ and $(Y,M,\mu)$, any function $f:X\to Y$ factors as
$X\stackrel{i_{f}}{\to} \coprod_{x\in X, y\in Y}\widetilde\mu(f(x),y) \stackrel{\pi_{2}\circ \pi_{1}}{\to}Y$
where 
 $i_{f}(x)=\langle \langle x, f(x)\rangle, \|f(x)\|_{\mu}\rangle$.
\end{remark}

\begin{example}\label{example:3}
Suppose ${\widetilde{\TT H}}(\vec x)= (\sum_{i}x_{i})^{2}$, so that its derivative 
$\Der{{\widetilde{\TT H}}}$ is interpreted by the function 
$\varphi(\vec x, \vec \epsilon) = 2 (\sum x)(\sum\epsilon) +(\sum_\epsilon)^{2}$ (where $\sum y=|\sum_{i=0}^{N}y_{i}|$). Then for any function $f\in \Nat \to \Real$ with auxiliary map $\psi(n,\theta)$, the perforation error $b\in D_{\Real}(\TT H(f), \TT H(f^{*}))$ from Example \ref{example:1} corresponds to 
$\varphi([f(i), \psi(i,1)]_{i=0,\dots, N})$. 
\end{example}

\begin{example}
By interpreting $\Bool$ as the set $2=\{0,1\}$ and $D_{\Bool}$ as the DLR $(2,2,\rho_{2})$ corresponding to the discrete metric, the distance function $d_{A}\in (\Pi x,y\in A)D_{A}(x,y)$ from Example \ref{ex:distance} yields for any DLR $(X,L,\rho)$ the function $x,y \mapsto \| x,y\|_{\rho}: X\times X\to L$.
\end{example}

\section{Change Structures and Incremental Computation}\label{sec5}

%
%
%

%
%
%
%
%
%
%
%
%
%
%
%

In many situations in programming it happens that, after running a program $f$ on some input $i$, one needs to re-run $f$ over some slightly changed input $i'$; incremental computation is about finding ways to 
optimize this second computation without having to re-run $f$ from scratch on the novel input.
For example (we take this example from \cite{Cai2014}) suppose $f$ computes the sum of a finite bag of natural numbers $x$. Suppose $f$ has been run on $x=\{1,2,3,4\}$, and now needs to be re-run on $x'=\{2,3,4,5\}$; 
then we can compute $f(x')$ incrementally as follows: first, let the \emph{change} between $x$ and $x'$ be the pair of bags $\de x=(\{-1\}, \{+5\})$ describing what has to be changed to turn $x$ into $x$'. Then the change $\de y$ between $f(x)$ and $f(x')$ is the value $\de y=-1+5=+4$, and $f(x')$ can be computed simply by adding $\de y$ to $f(x)$.  
In particular, the operation $\de f(x, \de x)$ taking a bag and a change and returning the change $\de y$ is called (once more!) a \emph{derivative} of $f$ (we call it a \emph{change derivative} for clarity), as it describes the change to get from $f(x)$ to $f(y)$ as a function of $x$ and the change $\de x$.

In \cite{Cai2014} these ideas have been turned into a \emph{change semantics} for the simply typed $\lambda$-calculus, which was later generalized and simplified through the theory of \emph{change actions} \cite{Picallo2019, Picallo2019b}. These approaches have been applied to model different forms of discrete and automated differentiation \cite{Picallo2019, Pearlmutter2016}, and more recently related to models of the differential $\lambda$-calculus \cite{dallago2}. 
We focus here on change structures from \cite{Cai2014} since they have a natural higher-order structure. 

A \emph{change structure} (in short, CS) is a tuple $(X, \Delta_{X}, \oplus, \ominus)$ where $X$ is a set, for all $x\in X$, $\Delta_{X}x$ is a set of \emph{changes over $x$}, and $\oplus: X\times \Delta_{X}\to X$ and  $\ominus: X\times X\to \Delta_{X}$ are operations satisfying (1) $x\ominus y\in \Delta_{X}y$ and (2) $x\oplus(y\ominus x)=y$. 

Any function $f:X\to Y$ admits a change derivative $\de f: X\times \Delta X\to \Delta Y$ defined by
$\de f(x,\de x)=f(x\oplus \de x)\ominus f(x)$ and satisfying
$
f( x\oplus \de x)= f(x)\oplus \de f (x, \de x)
$. In other words, $\de f(x,\de x)$ describes the change needed to pass from $f(x)$ to $f(x\oplus \de x)$. 

For all $x\in X$ we let $\B 0_{x}= x\ominus x \in D_{X}x$; notice that $x\oplus \B 0_{x}=x$ and 
$ \de f(x,\B 0_{x})= \B 0_{f(x)}$. Moreover, whenever $x\oplus \de x = y$, we let $\ominus\de x:= x\ominus y$. 
Finally, given $\de x \in \Delta_{X}x$ and $\de y\in \Delta_{X}(x\oplus \de x)$, we let $\de x+\de y:=
((x\oplus \de x)\oplus \de y) \ominus x$. 

%
%


The CS semantics of $\DTT$ will interpret $D_{A}(\_,\_)$ as the type of changes over $A$, and a judgement $a\in D_{A}(t,u)$ as expressing the fact that $a$ is a change from $t$ to $u$ (i.e.~that $t\oplus a=u$). Self-differences $\partial(t)$ will correspond to the null change $\B 0_{t}$, and $\Der f$ will correspond to the change derivative $\de f$.
Moreover, due to the higher-order structure of change structures (recalled in the Appendix), this model satisfies the extensionality axiom \eqref{cext}. 



We provide a sketch of the $\DTT$-category structure of the forgetful functor $U:\CS \to \Set$, leaving as usual most details to the Appendix.
For any CS $X$, a pure predicate $P\in\D P^{\flat}(X)$ is a CS $P=(X,\Delta_{X},\oplus,\ominus)$, with 
$(X\times X\mid P)=\coprod_{x,y\in X}\Delta_{X}(x,y)$ (where $\Delta_{X}(x,y)=\{\de x \in \Delta_{X}x \mid x\oplus \de x=y\}$), and projection $\pi_{X}: (X\times X\mid P)\to X\times X$. 
For any CS $Y$, $\D P(Y)$ is made of pullbacks $\langle f,g\rangle^{\sharp}P$, where $P\in \D P^{\flat}(X)$ is a CS $P=(X,\Delta_{X},\oplus,\ominus)$, and 
$(Y\mid \langle f,g\rangle^{\sharp}P)=\coprod_{y\in Y}\Delta_{X}(f(y),g(y))$, with associated projection $\pi_{Y}:
(Y\mid \langle f,g\rangle^{\sharp}P)\to Y$.
%

%

For any CS $X$, $\Dist X\in \D P^{\flat}(X)$ is $X$ itself, with $r_{X}(x)=\langle \langle x,x\rangle, \B 0_{x}\rangle$; moreover,  for any binary predicate $P=\langle f,g\rangle^{\sharp}P^{\flat}\in \D P(X\times X)$ (with $(X\times X\mid P)
=\coprod_{x,y\in X}\Delta_{Y}(f(x,y),g(x,y))$) and function
$c: X\to (X\times X\mid P)$, where
 $c(x)=\langle \langle  x,x\rangle, c'(x)\rangle$, with $c'(x)\in \Delta_{Y}(f(x,x),g(x,x))$, a diagonal filler $j:
( X\times X\mid \Dist X) \to( X\times X\mid P)$ is defined by
\begin{center}
\adjustbox{scale=0.9}{$
\big(j(\langle \langle x,x'\rangle,\de x\rangle) \big)_{2}=  
\de f(\langle x,x\rangle, \langle \B 0_{x}, \ominus \de x\rangle)
+ c'(x) + 
\de g(\langle x,x\rangle, \langle \B 0_{x},\de x\rangle) \in \Delta_{Y}(f(x,y), g(x,y))
%
$}
\end{center}
One can check that $(j(\langle\langle x,x\rangle,\B 0_{x}\rangle))_{3}= c(x)$; the validity of the $\eta$-rule requires the further assumption that for all $x,y$, $y\ominus x$ is the \emph{unique} change from $x$ to $y$.

\begin{remark}
When $h:X\to Y$ interprets some program $f\in A\to B$, the interpretation of $\Der t$ corresponds to constructing a diagonal filler as above with $f(x,y)=h(x)$, $g(x,y)=h(y)$ and $c'(x)=\B 0_{h(x)}$; then one obtains the map $x,y,\de x \mapsto \de h(x,\de x)$ as desired. 
\end{remark}


%
%
\begin{remark}
From Theorem \ref{thm:wfs} it follows that  given change structures on $X,Y$, any function $f:X\to Y$ factors as
$X\stackrel{i_{f}}{\to} \coprod_{x\in X, y\in Y}\Delta_{Y}(f(x),y) \stackrel{\pi_{2}\circ \pi_{1}}{\to}Y$
where 
 $i_{f}(x)=\langle \langle x, f(x)\rangle, \B 0_{f(x)}\rangle$.
\end{remark}
%

\begin{example}
Consider the change structure on $\Real$ where $\Delta_{\Real}x= \BB R$, $\oplus$ is addition and $\ominus$ is subtraction. Let $f,g:\Nat\to \Real$ be such that $g$ ``increments'' on $f$ through some function $F: \Nat\times \Nat \to \Real$ (i.e.~$g(x)=f(x)+F(x,0)$). In $\DTT$ $F$ yields then an element of $D_{\Nat\to \Real}(f,g)$. By reasoning as in Example \ref{example:1} we can construct the increment
from $\TT  H(f)$ to $\TT H(g)$; if $\TT H(f)$ is as in Example \ref{example:2}, this corresponds then to
$1/N\cdot \sum_{i=0}^{N}F(i,0)$. 
\end{example}

\section{Cartesian Differential Categories}\label{sec6}

%

We conclude our sketch of differential models of $\DTT$ with the axiomatization of 
program derivatives provided by cartesian differential categories (CDC). 
The introduction of the \emph{differential $\lambda$-calculus} \cite{difflambda}, an extension of the $\lambda$-calculus with a \emph{differential} operator $\de t$, has motivated much research on abstract axiomatizations of differentiation that generalize the usual derivatives from calculus to higher-order programming languages \cite{Blute2009, Manzo2010, Blute2019}. 
CDC can be seen as a common ground for all these approaches, as they provide basic algebraic rules for derivatives in a cartesian setting.  

We recall that a cartesian category  $\BB C$ is \emph{left-additive} when the Hom-objects of $\BB C$ are monoids, with the monoidal operations $0,+$ commuting with the cartesian structure (e.g.~$f+(g\times h)=(f+g)\times (f+h)$), and satisfies left-additivity, i.e.~$0\circ f=0$ and $(g+h)\circ f=(g\circ f)+(h\circ f)$. 
A CDC is a cartesian left-additive category endowed with a {derivative operator} $\de$ such that for all $f: X\to Y$, $\de f: X\times X\to Y$, satisfying a few axioms (D1)-(D7) (recalled in the Appendix). Intuitively, $\de f(x,y)$ describes the differential $f'(x) \cdot y$ of $f$ at $x$, so it should be a \emph{linear} function in $y$. This is reflected by the axiom (D2) stating, informally, that $\de f$ is additive in its second variable, i.e.~$\de f(x,0)=0$ and $\de f(x,y+y')=\de f(x,y)+\de f(x,y')$. 
Among the other axioms for $\de f$ we find analogs of axioms \eqref{dchain} (the chain rule) as well as axioms expressing the commutation of $\de f$ with the cartesian structure, plus some other axioms concerning second derivatives.
When a CDC is cartesian closed, one usually adds also axiom \eqref{jl1} (called $D$-Curry in \cite{Manzo2010}), and one speaks of a \emph{differential $\lambda$-category}.

We now describe the interpretation of $\DTT$ in a differential $\lambda$-category. In fact, everything works in any CDC if we forget about the higher-order structure.\footnote{Since Euclidean spaces $\BB R^{n}$ and smooth functions form a CDC \cite{Blute2009} this shows in particular that one can consistently interpret $\Der f$ as the ``true'' derivative from analysis.} We will interpret  $D_{A}(t,u)$ as a sort of ``tangent space'' of $t$ (notice that we  ignore $u$); the self-difference $\partial(t)$ will correspond to the zero vector $0$, and the derivative $\Der f$, which sends ``vectors tangent to $x$'' into ``vectors tangent to $fx$'', will correspond to $\de f$. 
Moreover, the resulting model satisfies the extensionality axioms \eqref{cext} and \eqref{fext1} (see the Appendix for details).

Let $\BB C$ be a differential $\lambda$-category. We sketch the $\DTT$-structure on the identity functor $\mathrm{id}:\BB C\to \BB C$, leaving all details to the Appendix. The classes $\D P^{\flat}(X)$ contain all objects of the form $X^{n}$ (where $X^{0}=1$, $X^{n+1}=X^{n}\times X$), with 
$(X^{2}\mid X^{n})= X^{2}\times X^{n}$ and projection $\pi_{X}:X^{2}\times X^{n}\to X^{2}$. 
$\D P(Y)$ is made of pullbacks $\langle f,g\rangle^{\sharp}X^{n}$, for arrows $f,g:Y\to X$, with 
$(Y\mid \langle f,g\rangle^{\sharp}X^{n})= Y\times X^{n}$ and projection $\pi_{Y}:(Y\mid \langle f,g\rangle^{\sharp}X^{n})\to Y$.
%
%

For any object $X$, the pure predicate $\Dist X\in \D P^{\flat}(X)$ is just $X$, and $r_{X}:X\to X^{3}$ is $r_{X}=\langle \langle\mathrm{id_{X}}, \mathrm{id}_{X}\rangle,0\rangle$.
%
%
%
Moreover, for any binary predicate $P=\langle f,g\rangle^{\sharp}Z^{n}\in \D P(X^{2})$, where $f,g:X^{2}\to Z$, for any arrow 
$c: X\to (X^{2}\mid P)=X^{2}\times Z^{n}$, where 
$c=\langle \mathrm{id}_{X},\mathrm{id}_{X},c'\rangle$, with $c': X\to Z^{n}$, we can define a diagonal filler $j: X^{3} \to X^{2}\times Z^{n} $ by letting
\begin{center}
$
j = \langle \langle\pi_{1}\circ \pi_{1}, \pi_{2}\circ \pi_{1}\rangle, c'\circ( \pi_{1}\circ \pi_{1})+ \langle\de f\circ \langle\pi_{1}\circ \pi_{1},\pi_{2} \rangle\rangle^{n}\rangle$
\end{center}
Observe that $(j\circ r_{X})_{2}=c$ and moreover, if $Z^{n}=X$, $f=\mathrm{id}_{X}$ and $c=r_{X}$, 
then $j= \mathrm{id}_{X\times X}$, so both the $\beta$- and $\eta$-rules are satisfied.

\begin{remark}
When $h:X\to Y$ interprets $t\in A\to B$, $\Der h$  corresponds to a diagonal filler with $f=h\circ \pi_{1},g=\pi_{2}$ and $c=\langle \mathrm{id}_{X}, 0\rangle$, yielding the map $ \de h:X\times X\to Y$.
\end{remark}

\begin{remark}
From Theorem \ref{thm:wfs} it follows that the classes $\C L_{\D P}, \C R_{\D P}$ form a WFS in $\BB C$, where $f:X\to Y$ factors as $X\stackrel{i_{f}}{\to} (X\times Y)\times Y \stackrel{\pi_{2}\circ \pi_{1}}{\to}X$, with 
$i_{f}=\langle\langle\mathrm{id}_{X},f\rangle,0\rangle$.
\end{remark}

\section{Conclusions}\label{sec8}

\subparagraph*{Related Work}

$\DTT$ is definitely not the first proof system proposed to formalize relational reasoning for higher-order programs (nor the first one based on dependent types, e.g.~\cite{Stewart2013}). 
Among the many existing systems we can mention the logics for parametricity and logical relations \cite{Plotkin1993, IRIS}, the refinement type systems for cryptography \cite{Barthe2014}, differential privacy \cite{Barthe_2012} and relational cost analysis \cite{Barthe2017}, Relational Hoare Type Theory \cite{Stewart2013} and Relational Higher-Order Logic \cite{Barthe2017}. 
In particular, it is tempting to look at $\DTT$ as a 
\emph{proof-relevant} variant of (some fragment of) RHOL, since 
the latter is based on judgements of the form $\Gamma\mid \Psi\vdash t:A\sim u:B \mid \varphi$, where $\Gamma\vdash t:A, u:B$ are typings in ST$\lambda$C, and $\Psi,\varphi$ are logical formulas depending on the variables in $\Gamma$ as well as $t$ and $u$. 
Indeed, the main difference between $\DTT$ and such systems is that program differences are represented as proof objects. This looks a rather natural choice at least for those semantics (like e.g.~DLR and CS) where program differences can be seen as being themselves some kind of programs.

Neither we are the first to observe formal correspondences between various notions of program derivative. For instance, a formalization of change structures in the context of DLR is discussed in \cite{dallago2}; 
connections between metric semantics and DLR are studied in \cite{Geoffroy2020, LICS2021}, based on  \emph{generalized} metric spaces and \emph{quantaloid}-enriched categories \cite{Stubbe2014}. In particular, the 
 DLR model sketched in Section \ref{sec4} can be used to provide a ``quantaloid-interpretation'' of $D_{A}(\_,\_)$ (to be compared with the groupoid structure of the identity type in full ITT). 
Recently, \emph{cartesian difference categories} \cite{Picallo2020} have been proposed as a general framework for program derivatives (unifying cartesian differential categories with approaches 
related to both discrete derivatives and 
 incremental computation). It seems that our model in Section \ref{sec4} can be extended to such categories in a straightforward way.

\subparagraph*{Future Work}

The main goal of this paper was to provide evidence that ITT could serve the purpose of formalizing differential reasoning. Yet, examples were left necessarily sketchy  
and more 
substantial formalization work (as well as implementations) needs to be addressed.

As $\DTT$ is a fragment of ITT, syntactic results like strong normalization follow. Yet, the problem should be addressed whether such results are stable also when further equations for derivatives (as those described in Section \ref{sec2}) are added. Moreover, it is well-known that a suitable formulation of 2-dimensional ITT satisfies a \emph{canonicity} condition \cite{Licata2012}: a closed normal term of type $\Bool$ is either $\B 0,\B 1$; it would be interesting to see whether this result can be scaled to the fragment $\DTT$.

Finally, it seems worth exploring extensions of $\DTT$ with further structure, for instance with dependent types at the base level (e.g.~following work on dependent types for differential privacy \cite{Gaboardi_2013}), as well as with primitives for probabilistic reasoning (as in \cite{Reed_2010, Barthe_2012, Barthe2014}).

%
%
%
%
%
%
%
%
%

\bibliography{main.bib}

\appendix

\section{Type Systems: Details}

The typing rules of $\DTT$ can be divided into the rules for typing program terms and the rules for typing difference terms.

\subparagraph*{Simple Types}
In its basic formulation, the rules for typing program terms are the standard typing rules of ST$\lambda$C, recalled in Fig.~\ref{fig:stlc}. 
In Section 4 we considered a variant of $\DTT$ with affine simple types and an exponential $!_{r}A$, for all $r\in \BB R_{\geq 0}$, corresponding to a fragment of $\mathrm{Fuzz}$ \cite{Reed_2010}. 
We describe the rules of this fragment, that we call ST$\lambda$C$^{!}$, in Fig.~\ref{fig:fuzz}.
The terms are generated by the grammar:
$$
t,u:= x\mid \lambda x.t\mid tu \mid !t \mid \LET !x=t\IN u\mid (t,u)\mid  \LET (x,y)=t\IN u 
$$
For the purposes of this article we limited ourselves to a minimal fragment of this language. For a more practical language see \cite{Reed_2010, Gaboardi2017}.  
Simple types are generated by the grammar below:
$$
A,B:= X\mid !_{r}A \quad (r\in \BB R_{\geq 0}) \mid A\multimap B \mid A\otimes B
$$
Type judgements are of the form $\Phi \vdash t:A$, where a context $\Phi$ is a list of declarations of the form $x\in_{r}A$, for some $r\in \BB R_{\geq 0}$.
We define the following operation $\Phi+\Psi$  as follows:
\begin{align*}
() + () & =() \\
(\Phi, x\in_{r} A)+( \Psi, x\in_{s} A) & =  (\Phi+\Psi), x\in_{r+s}A \\
(\Phi, x\in_{r}A)+\Psi & =(\Phi+\Psi), x\in_{r}A \qquad (x\notin \Psi) \\
\Phi+ (\Psi, x\in_{r}A) &= (\Phi+\Psi), x\in_{r} A \qquad (x\notin \Phi)
\end{align*}
Moreover, we let $s\Phi$ be the context made all judgmenets $x\in_{sr}A$, where $(x\in_{r}A)\in \Phi$.  

Observe that one can always type an affine term like e.g.~$\lambda xy.x$ with a linear type $A\multimap B\multimap A$. Instead, a term like $\lambda xy.x(xy)$ containing two occurrences of $x$ cannot be given the linear type $(A\multimap A)\multimap (A\multimap A)$ but a type of the form
$!_{2}(A\multimap A)\multimap (A\multimap A)$. 

There exists a ``forgetful'' translation $^{*}$ from ST$\lambda$C$^{!}$ to ST$\lambda$C given on terms by 
\begin{align*}
 & x^{*}=x \qquad (\lambda x.t)^{*}=\lambda x.t^{*} \qquad (tu)^{*}=t^{*}u^{*}\\
&  (!t)^{*}=t^{*} \qquad
(\LET !x =t\IN u)^{*}=(\lambda x.t^{*})u^{*} \\ 
& (t,u)^{*}=\langle t^{*},u^{*}\rangle \qquad (\LET (x,y)=t\IN u)^{*}= (\lambda xy.u^{*})\pi_{1}(t^{*})\pi_{2}(t^{*})
\end{align*}
and on types by
\begin{align*}
X^{*}=X \qquad (!_{r}A)^{*}=A^{*} \qquad (A\multimap B)^{*}=A^{*}\to B^{*} \qquad (A\otimes B)^{*}=A^{*}\times B^{*}
\end{align*}
This translation can be used to define the functor $H$ from Section 3 from the context category $\CTX^{!}_{0}$ of ST$\lambda$C$^{!}$ to $\CTX$.

\begin{figure}
\fbox{
\begin{minipage}{0.9\textwidth}
\begin{center}
\AXC{$x\in A\in \Phi$}
\UIC{$\Phi\vdash x\in A$}
\DP

\bigskip

\begin{tabular}{c c }
\AXC{$\Phi, x\in A\vdash t\in B$}
\UIC{$\Phi\vdash \lambda x.t\in A\to B$}
\DP

& 

\AXC{$\Phi \vdash t\in A\to B$}
\AXC{$\Phi\vdash u\in A$}
\BIC{$\Phi \vdash tu\in B$}
\DP
\\ 

& 

\\

\AXC{$\Phi \vdash t\in A$}
\AXC{$\Phi \vdash u\in B$}
\BIC{$\Phi \vdash \langle t,u\rangle \in A\times B$}
\DP

& 

\AXC{$\Phi \vdash t\in A_{1}\times _{2}$}
\UIC{$\Phi \vdash \pi_{i}(t)\in A_{i}$}
\DP

\end{tabular}
\end{center}
\end{minipage}
}
\caption{Typing rules for ST$\lambda$C.}
\label{fig:stlc}
\end{figure}

\begin{figure}
\fbox{
\begin{minipage}{0.9\textwidth}
\begin{center}
\AXC{$x\in_{r} A\in \Phi$}
\UIC{$\Phi\vdash x\in A$}
\DP

\bigskip

\begin{tabular}{c c }
\AXC{$\Phi, x\in_{1} A\vdash t\in B$}
\UIC{$\Phi\vdash \lambda x.t\in A\multimap B$}
\DP

& 

\AXC{$\Phi \vdash t\in A\multimap B$}
\AXC{$\Phi\vdash u\in A$}
\BIC{$\Phi \vdash tu\in B$}
\DP
\\ 

& 

\\

\AXC{$\Phi \vdash t\in A$}
\AXC{$\Psi \vdash u\in B$}
\BIC{$\Phi+\Psi \vdash ( t,u) \in A\otimes B$}
\DP

& 

\AXC{$\Phi\vdash t\in A \otimes B$}
\AXC{$\Psi,x\in_{r}A, y\in_{r} B\vdash u:C$}
\BIC{$\Phi+\Psi \vdash \LET (x,y)=t \IN u\in C$}
\DP
\\
& 

\\

\AXC{$\Phi \vdash t\in A$}
\UIC{$s\Phi \vdash !t \in \ !_{s}A$}
\DP

& 

\AXC{$\Phi\vdash t\in !_{s}A$}
\AXC{$\Psi,x\in_{rs}A\vdash u:C$}
\BIC{$r\Phi+\Psi \vdash \LET !x=t \IN u\in C$}
\DP

\end{tabular}
\end{center}
\end{minipage}
}
\caption{Typing rules for ST$\lambda$C$^{!}$.}
\label{fig:fuzz}
\end{figure}

\subparagraph*{Difference Types}

As discussed in Section 2, the rules for difference terms are the standard rules of ITT, restricted to the language of $\DTT$.
We illustrate in Fig.~\ref{fig:diffrules1} the rules for the difference type (where, compared to the rules sketched in Section 2, we highlight the role of contexts), and all other rules in Fig.~\ref{fig:diffrules2}.
Finally, we illustrate $\beta$- and $\eta$-rules in Fig.~\ref{fig:betaeta}.

\begin{figure}[t]
\fbox{
\begin{minipage}{0.9\textwidth}
\begin{center}
$\AXC{$\B x\in \Phi_{0} \vdash t\in A$}
\UIC{$(\B x\in \Phi_{0}\mid\BS \epsilon \in \Phi_{1}(\B x))\vdash \partial(t)\in D_{A}(t,t)$}
\DP$

\bigskip

$
\AXC{$(\B x\in \Phi_{0}\mid\BS \epsilon \in \Phi_{1}(\B x,t,u))\vdash a\in D_{A}(t,u)$}
\noLine
\UIC{$(\B x\in \Phi_{0},x,y\in A\mid\BS \epsilon \in \Phi_{1}(\B x,x,y))\vdash \C C(\B x,x,y)\in \DType$}
\noLine
\UIC{$(\B x\in \Phi_{0},x\in A\mid\BS \epsilon \in \Phi_{1}(\B x,x,x))\vdash b \in \C C(\B x,x,x)$}
\UIC{$(\B x\in \Phi_{0}\mid\BS \epsilon \in \Phi_{1}(\B x,t,u))\vdash
\J (t,u,a,[x]b)\in \C C(\B x, t,u)$}
\DP  
$
\end{center}
\end{minipage}
}
\caption{Typing rules for the difference type.}
\label{fig:diffrules1}
\end{figure}

\begin{figure}
\fbox{
\begin{minipage}{0.9\textwidth}
\begin{center}
\AXC{$(\B x\in \Phi_{0}, x\in A\mid\BS \epsilon \in \Phi_{1}(\B x))\vdash a\in \C A(\B x, x)$}
\UIC{$(\B x\in \Phi_{0}\mid\BS \epsilon \in \Phi_{1}(\B x))\vdash\lambda x. a\in(\Pi x\in A) \C A(\B x, x)$}
\DP

\bigskip

\AXC{$(\B x\in \Phi_{0}\mid\BS \epsilon \in \Phi_{1}(\B x))\vdash a\in (\Pi x\in A)\C A(\B x, x)$}
\noLine
\UIC{$(\B x\in \Phi_{0})\vdash t \in A$}
\UIC{$(\B x\in \Phi_{0}\mid\BS \epsilon \in \Phi_{1}(\B x))\vdash at\in \C A(\B x, t)$}
\DP

\bigskip

\AXC{$(\B x\in \Phi_{0}, x,y\in A\mid\BS \epsilon \in \Phi_{1}(\B x), \epsilon\in D_{A}(x,y)\vdash a\in \C A(\B x, x,y)$}
\UIC{$(\B x\in \Phi_{0}\mid\BS \epsilon \in \Phi_{1}(\B x))\vdash\lambda xy\epsilon. a\in(\Pi x,y\in A)(D_{A}(x,y)\to \C A(\B x, x,y))$}
\DP

\bigskip

\AXC{$(\B x\in \Phi_{0}\mid\BS \epsilon \in \Phi_{1}(\B x))\vdash a\in (\Pi x,y\in A)(D_{A}(x,y)\to\C A(\B x, x,y))$}
\noLine
\UIC{$(\B x\in \Phi_{0}\mid\BS \epsilon \in \Phi_{1}(\B x))\vdash b\in D_{A}(t,u)$}
\UIC{$(\B x\in \Phi_{0}\mid\BS \epsilon \in \Phi_{1}(\B x))\vdash atub\in \C A(\B x, t,u)$}
\DP

\bigskip

\AXC{$(\B x\in \Phi_{0}\mid\BS \epsilon \in \Phi_{1}(\B x))\vdash a\in \C A$}
\AXC{$(\B x\in \Phi_{0}\mid\BS \epsilon \in \Phi_{1}(\B x))\vdash b\in \C A$}
\BIC{$(\B x\in \Phi_{0}\mid\BS \epsilon \in \Phi_{1}(\B x))\vdash \langle a,b\rangle\in \C A\times \C B$}
\DP

\bigskip

\AXC{$(\B x\in \Phi_{0}\mid\BS \epsilon \in \Phi_{1}(\B x))\vdash a\in \C A_{1}\times \C A_{2}$}
\UIC{$(\B x\in \Phi_{0}\mid\BS \epsilon \in \Phi_{1}(\B x))\vdash \pi_{i}(a)\in \C A_{i}$}
\DP

\end{center}
\end{minipage}
}
\caption{Typing rules of $\DTT$.}
\label{fig:diffrules2}
\end{figure}

\begin{figure}
\fbox{
\begin{minipage}{0.9\textwidth}
\begin{align*}
(\lambda x.a)t & \simeq_{\beta} a[t/x]  &  (\lambda x.ax)&\simeq_{\eta} a \quad (x\notin \mathrm{FV}(a))\\
(\lambda xy\epsilon.a)tub & \simeq_{\beta} a[t/x,u/y,b/\epsilon) & 
(\lambda xy\epsilon.axy\epsilon)& \simeq_{\eta}a \quad (x,y,\epsilon\notin \mathrm{FV}(a))\\
\pi_{i}(\langle a_{1},a_{2}\rangle)& \simeq_{\beta} a_{i} &
\langle \pi_{1}(a), \pi_{2}(a)\rangle &\simeq_{\eta} a \\
\J (t,t,\partial(t),[x]b) & \simeq_{\beta} b[t/x] &
\ J(t,u,a,[x]\partial(x)) & \simeq_{\eta} a
\end{align*}
\medskip
\end{minipage}
}
\caption{$\beta$- and $\eta$-rules for difference terms.}
\label{fig:betaeta}
\end{figure}

\begin{proof}{Proof of Lemma \ref{lemma:pure}}
\begin{itemize}
\item if $\C C(\B z)=D_{C}(t, u)$, then $(t,u):(\B z\in \Phi_{0})\to (y,y'\in C)$ we let $\Psi_{0}=C$ and $\C C^{\flat}(y,y')=D_{C}(y,y')$.

\item if $\C C(\B z)= \C C_{1}(\B z)\times \C C_{2}(\B z)$, then by induction hypothesis there exist pure predicates
$( x,  y\in A_{1})\B C_{1}^{\flat},(x,y\in A_{2})\B C_{2}^{\flat}$,  and terms $(t_{1},u_{1}):(\B z\in \Phi_{0})\to (x,y\in A_{1})$ and
$(t_{2},u_{2}):(\B z\in \Phi_{0})\to (x,y\in A_{2})$ such that $\C C_{1}(\B z)=\C C_{1}^{\flat}( t_{1}, u_{1})$ and  $\C C_{2}(\B z)=\C C_{2}^{\flat}( t_{2}, u_{2})$. We can let then
$A= A_{1}\times A_{2}$, $ t=\langle  t_{1},\B t_{2}\rangle$, $u=\langle u_{1}, u_{2}\rangle$ and 
$\C C^{\flat}(w,w')= \C C_{1}^{\flat}(\pi_{1}(w), \pi'(w'))\times \C C_{2}^{\flat}(\pi_{2}(w), \pi_{2}(w'))$.

\item if $\C C(\B z)=(\Pi w\in D)\C B(\B z,w)$ then by induction hypothesis there exists a pure predicate
$(x,y\in A')\C B^{\flat}(x,  y)$ and terms $( t', u'):(\B z\in \Phi_{0}, w\in D)\to ( x, y\in A')$ such that $\C B(\B z,w)=\C B^{\flat}( t',  u')$. We let then $A= D\to A'$, $\C C^{\flat}( x,  y)=(\Pi w\in A)\C B^{\flat}(  xw,  yw )$ and $t=\lambda w.t'$, $u=\lambda w.u'$.

\item if $\C C(\B z)=(\Pi w,w'\in C)(D_{C}(w,w')\to \C B(\B z,w,w')$ then by induction hypothesis there exists a pure predicate $( x,  y\in A')\C B^{\flat}( x,  y)$ and terms $t',u':(\B z\in \Phi_{0}, w,w'\in A) \to ( x,  y\in A')$ such that $\C B(\B z, w,w')=\C B^{\flat}( t',  u')$. 
We let then  $A= C\to (C\to A')$,  $\C C^{\flat}(x,  y)=(\Pi w,w'\in C)(D_{A}(w,w')\to\C B^{\flat}( xww',  yww' ))$  and $t=\lambda ww'. t'$, $u= \lambda ww'.u'$.
%
%
\end{itemize}
\end{proof}

\section{Equational Rules for Derivatives: Details}

We list a few equational rules for the operators $\J$ and $\mathsf D$, that make sense under the validity of some of the extensionality axioms.

\begin{itemize}
\item in presence of \eqref{cext} one can consider the following rules:
\begin{align*}
\partial(\langle t,u\rangle) & = \langle\partial(t), \partial(u)\rangle \tag{$\J\times a$}\label{jca}\\
\J (t,u,a, [x]\langle b,c\rangle)&=\langle \J (t,u,a,[x],b),\J(t,u,a,[x],b)\rangle \tag{$\J\times b$} \label{jcb}\\
\J (t,u,a,[x]\pi_{i}(b)) & = \pi_{i}\big ( \J (t,u,a,[x]b)\big)\tag{$\J\times c$}\label{jc}
\end{align*}
These rules say that the difference structure commutes with the cartesian structure. 
\item in presence of \eqref{fext1} one can consider the following rules:
\begin{align*}
\partial(\lambda x.t) & = \lambda x.\partial(t) \tag{$\J \lambda 1 a$}\label{jl1a}\\
\J (t,u,a,[x]\lambda y.b(x,y)) & =\lambda y. \J (\langle t,y\rangle,\langle u,y\rangle,\langle a,\partial(y)\rangle,[z]b(\pi_{1}(z),\pi_{2}(z))
\tag{$\J\lambda 1b$}\label{jl1} 
\end{align*}
from which it follows that the derivative ``in $x$''  $\Der{\lambda xy.f(x,y)}$ of some binary function $f(x,y)$ is the same as the derivative ``in $z=\langle x,y\rangle$'', i.e.~$\Der{\lambda z.f(\pi_{1}(z), \pi_{2}(z))} $, where the error on $y$ is its self-difference, i.e.~$\lambda xx'\delta y.\Der{\lambda z.f(\pi_{1}(z), \pi_{2}(z))
}\langle x,y\rangle \langle x',y\rangle\langle \epsilon, \partial(y)\rangle$ (when $\partial(y)$ is interpreted as the null error $0$, this coincides with axiom $D$-curry from \cite{Manzo2010});
%

\item in presence of \eqref{fext2} and \eqref{cext} one can consider the following rules:
\begin{align*}
\partial(\lambda x.t) & = \Der{\lambda x.t} \tag{$\J\lambda 2a$}\label{jl2a}\\
\J (t,u,a, [x]\lambda y.b)  & =
\lambda yy'\delta.\J\Big (\langle t,y\rangle, \langle u,y'\rangle, \langle a,\delta\rangle, [z]\big (b[\pi_{1}(z)/x, \pi_{2}(z)/y]\big)\Big)
\tag{$\J\lambda 2b$}\label{jl2} 
%
\end{align*}
This shows that the axioms \eqref{fext1} and \eqref{fext2} might give rise to derivative with a rather different operational semantics. In particular, Eq.~\eqref{jl2a} says that the self-difference of a function coincides with its derivative (this will be the case in the models from Sec.~\ref{sec4} and \ref{sec5}); moreover, \eqref{jl2} says that the derivative ``in $x$'' of a binary function $f(x,y)$ actually derives also in the variable $y$. In particular one can deduce that $\Der{\lambda xy.t(x,y)}  xx'\epsilon yy'\delta$ is the same as $\Der{\lambda z.t(\pi_{1}(z),\pi_{2}(z))}\langle x,y\rangle\langle x'y'\rangle \langle \epsilon,\delta\rangle$.
\end{itemize}

%
%
%


\section{$\DTT$-Categories: Details}

In this section we define in detail the notion of $\DTT$-category that was sketched in Section \ref{sec3}. 
Throughout this section we suppose $U:\BB C_{0}\to \BB C$ to be a strict monoidal functor, where $\BB C_{0}$ is a symmetric monoidal category and $\BB C$ is a cartesian category.
Moreover, we suppose that for any object $\Gamma$ of $\BB C_{0}$ the following data is given:
\begin{itemize}
\item a collection $\D P(\Gamma)$ of predicates over $\Gamma$, and for each $P\in \D P(\Gamma)$ an object $\Gamma\mid P$ and an arrow $\pi_{\Gamma}: \Gamma\mid P\to U\Gamma$; we further require that:
	\begin{itemize}
	\item  for all $P\in \D P(\Gamma)$ and $f:U\Delta \to U\Gamma$, the pullback $f^{\sharp}P$ exists and is in $ \D P(\Delta)$;	
	\item when the monoidal product of $\BB C_{0}$ is not cartesian, we furthermore require that:
		\begin{itemize}
		 \item for all $P\in \D P(\Gamma)$, the pullback $\pi_{U\Gamma}^{\sharp}P$ exists and is in $ \D P(\Delta.\Gamma)$, where $\pi_{U\Gamma}:U\Delta\times U\Gamma \to U\Gamma$;
		  \item for all $P\in \D P(\Gamma.\Gamma)$, the pullback $\delta_{U\Gamma}^{\sharp}P$ exists and is in $ \D P(\Gamma)$, where $\delta_{U\Gamma}:U\Gamma \to U\Gamma\times U\Gamma$;
		\end{itemize}

	\item we require all mentioned pullbacks to be associative and unital;
	\item for all $P,Q\in \D P(\Gamma)$, a predicate $P\times Q\in \D P(\Gamma)$ exists such that $\Gamma\mid P\times Q$ is the cartesian product of $\Gamma\mid P$ and $\Gamma\mid Q$ in the \emph{slice category} $\BB C_{\D P}^{\Gamma}$, i.e.~in the category of predicates in $\D P(\Gamma)$ and \emph{vertical} arrows (see Section 3), i.e.~those arrows $h: \Gamma\mid P\to \Gamma\mid Q$ making the diagram below commute
	$$
	\begin{tikzcd}
	\Gamma\mid P\ar{d}[left]{\pi_{\Gamma}} \ar{r}{h} & \Gamma\mid Q \ar{d}{\pi_{\Gamma}} \\
	U\Gamma\ar[-, double]{r} & U\Gamma
	\end{tikzcd}
	$$
	
		\end{itemize}

\item a sub-collection $\D P^{\flat}(\Gamma)\subseteq \D P(\Gamma.\Gamma)$ of \emph{pure predicates} 
generating  $\D P$, (i.e.~such that any $P\in \D P(\Gamma)$ is of the form $ f^{\sharp}P^{\flat}$, where $P^{\flat}\in \D P^{\flat}(\Delta)$ and $f\in \BB C_{0}(\Delta, \Gamma.\Gamma)$), 
together with a chosen pure predicate $\Dist \Gamma\in \D P^{\flat}(\Gamma)$, and closed with respect to the following conditions: 

\begin{itemize}

\item for all pure predicates $P,Q\in \D P^{\flat}(\Gamma)$,  $P\times Q\in \D P^{\flat}(\Gamma)$;
%
\item for all objects $\Delta, \Gamma$ of $\BB C_{0}$ and pure predicate $P\in \D P^{\flat}(\Gamma)$, a pure predicate $\Pi _{\Delta}P\in \D P^{\flat}(\Gamma^{\Delta}.\Gamma^{\Delta})$ such that for all $Q\in \D P(\Sigma)$ there is a bijection
$$
\BB C\big ( \Delta. \Sigma\mid\pi_{\Sigma}^{\sharp}Q \ ,\Gamma.\Gamma\mid P\big ) \simeq
\BB C\big ( \Sigma\mid Q \ , \  \Gamma^{\Delta}.\Gamma^{\Delta}\mid \Pi_{\Delta}P\big )
$$
where $\pi_{\Sigma}$ is the projection $\Delta. \Sigma\to \Sigma$.
%
%
Moreover, we require that dependent products commute with pullbacks, i.e.~for all $f\in \BB C_{0}(\Gamma', \Gamma)$, $f^{\sharp}(\Pi_{\Delta}P)= \Pi_{\Delta}(f^{\sharp}P)$.


\item for all objects $\Delta,\Gamma$ of $\BB C_{0}$ and pure predicate $P\in \D P^{\flat}(\Gamma.\Gamma)$  a pure predicate
$P^{\left(\Dist \Delta\right)}\in \D P^{\flat}(\Gamma^{\Delta.\Delta})$ such that for all $Q\in \D P(\Sigma)$ there is a bijection
$$
\BB C\big ( \Delta. \Delta.\Sigma\mid  \pi_{\Sigma}^{\sharp}Q  \times\Dist \Delta   \ , \   \Gamma.\Gamma\mid P \big)
\simeq
\BB C\big ( \Sigma\mid Q \ ,\   \Gamma^{\Delta.\Delta}. \Gamma^{\Delta.\Delta}\mid P^{\left(\Dist \Delta\right)}  \big)
$$
where $\pi_{\Sigma}$ is the projection $\Delta. \Delta .\Sigma\to \Sigma$.
%
%
%
Moreover, 
we require that 
 for all $f\in \BB C_{0}(\Gamma', \Gamma)$, $f^{\sharp}\left(P^{\left(\Dist{\Delta}\right)}\right)= (f^{\sharp}P)^{\left(\Dist{\Delta}\right)}$.
 
\end{itemize}

\end{itemize}
%
%

We will make extensive use of the following constructions: for all objects $\Gamma, \Delta $ of $\BB C_{0}$ there exists 
\begin{itemize}
\item a functor $\pi_{\Delta.\Gamma}^{\sharp}: \BB C_{\D P}^{\Delta.\Gamma}\to \BB C_{\D P}^{\Delta.\Gamma.\Gamma}$ induced by the projection $\pi_{\Delta.\Gamma}: \Delta.\Gamma.\Gamma\to \Delta.\Gamma$ which deletes the third component of $\Delta.\Gamma.\Gamma$, together with an arrow
$\pi_{\Delta.\Gamma}^{+}:\Delta.\Gamma.\Gamma\mid \pi_{\Delta.\Gamma}^{\sharp}(P)\to  \Delta.\Gamma\mid P$ making the pullback diagram below commute:
$$
\begin{tikzcd}
\Delta.\Gamma.\Gamma\mid \pi_{\Delta.\Gamma}^{\sharp}(P)\ar{d}[left]{\pi_{\Delta.\Gamma.\Gamma}} \ar{r}{\pi_{\Delta.\Gamma}^{+}} & \Delta.\Gamma\mid P \ar{d}{\pi_{\Delta.\Gamma}}  \\
U\Delta.U\Gamma.U\Gamma \ar{r}{\pi_{\Delta.\Gamma}} & U\Delta.U\Gamma
\end{tikzcd}
$$

\item a functor $\delta_{\Delta.\Gamma\mid Q}^{\sharp}: \BB C_{\D P}^{\Delta.\Gamma.\Gamma}\to \BB C_{\D P}^{\Delta.\Gamma}$ induced by the arrow $\delta_{\Delta.\Gamma}: \Delta.\Gamma\to \Delta.\Gamma.\Gamma$ which duplicates the second component of $\Delta.\Gamma$, together with an arrow
$\delta_{\Delta.\Gamma\mid Q}^{+}: \Delta.\Gamma\mid\delta_{\Delta.\Gamma\mid Q}^{\sharp}(P) \to \Delta.\Gamma\mid Q\times P$ making the pullback diagram below commute:
$$
\begin{tikzcd}
\Delta.\Gamma\mid \delta_{\Delta.\Gamma}^{\sharp}(P)\ar{d}[left]{\pi_{\Delta.\Gamma}} \ar{r}{\delta_{\Delta.\Gamma}^{+}} & \Delta.\Gamma.\Gamma\mid  P \ar{d}{\pi_{\Delta.\Gamma.\Gamma}}  \\
U\Delta.U\Gamma \ar{r}{\delta_{\Delta,\Gamma}} & U\Delta.U\Gamma.U\Gamma
\end{tikzcd}
$$

\end{itemize}
Moreover, one has $\delta_{\Delta.\Gamma}^{\sharp}\circ \pi_{\Delta.\Gamma}^{\sharp}= \mathrm{id}_{\BB C_{\D P}^{\Delta.\Gamma}}$.

In $\CTX$ the operation $\pi_{\Delta.\Gamma}^{\sharp}$ turns a predicate $\C C(z,x)$ into a predicate $\C C^{\dag}(z,x,y)=\C C(z,x)$ by adding a ``dummy'' variable $y$; the operation $\delta_{\Delta.\Gamma}^{\sharp}$ turns a predicate $\C C(z,x,y)$ into a predicate $\C C^{\ddag}(z,x)=P(z,x,x)$. Notice that $(\C C^{\dag})^{\ddag}=\C C$.
%
%
%

The difference structure is provided by the following data:
\begin{itemize}
\item for all objects $\Gamma,\Delta$ of $\BB C_{0}$ and for all predicate $Q\in \D P(\Delta.\Gamma.\Gamma)$, an arrow $r_{\Delta,\Gamma\mid Q}: 
\Delta.\Gamma\mid \delta_{\Delta.\Gamma}^{\sharp}(Q) \to \Delta.\Gamma.\Gamma \mid Q\times \Dist\Gamma$, 
such that the composition of $r_{\Delta,\Gamma\mid Q}$ with the projection $\pi_{1}:
\Delta.\Gamma.\Gamma\mid Q\times \Dist \Gamma\to \Delta.\Gamma.\Gamma\mid Q$ coincides with 
$\delta_{\Delta,\Gamma}^{+}:\Delta.\Gamma\mid \delta_{\Delta.\Gamma}^{\sharp}(Q) \to \Delta.\Gamma.\Gamma\mid Q$

\item for all objects $\Delta,\Gamma$ of $\BB C_{0}$, predicates 
$Q\in \D P(\Delta.\Gamma.\Gamma)$ and  $P=f^{\sharp}P^{\flat}\in \D P(\Delta.\Gamma.\Gamma)$, and for any arrow  $c: \Delta.\Gamma\mid \delta_{\Delta.\Gamma}^{\sharp}(Q) \to \Delta.\Gamma.\Gamma \mid Q\times P$ making the diagram below commute
$$
\begin{tikzcd}
\Delta.\Gamma\mid \delta_{\Delta.\Gamma}^{\sharp}(Q)\ar{d}[left]{r_{\Delta,\Gamma\mid Q}} \ar{rr}{c} & & \Delta.\Gamma.\Gamma\mid Q\times P \ar{d}{\pi_{1}} \\
\Delta.\Gamma.\Gamma\mid Q\times \Dist\Gamma 
\ar{rr}{\pi_{1}} & & 
\Delta.\Gamma.\Gamma\mid Q
\end{tikzcd}
$$
(notice that this implies that $c$ is of the form $\langle \delta_{\Delta,\Gamma}^{+}(Q),c'\rangle$) a choice of a diagonal filler $j_{\Delta,\Gamma,Q,f,P,c}$ making both triangles commute. 
\end{itemize}
We require the data above to satisfy a few coherence conditions, namely that for all $g\in \BB C_{0}(\Sigma,\Delta)$ and vertical morphism 
$h\in \BB C_{\D P}^{\Delta.\Gamma.\Gamma} (R , Q)$, 
 $\Dist \Gamma=(g.\Gamma.\Gamma)^{\sharp}\Dist \Gamma$, and the pullback diagrams below commute:
$$
\begin{tikzcd}
\Sigma.\Gamma\mid \delta_{\Delta.\Gamma}^{\sharp}(g^{\sharp}Q) \ar{d}[left]{r_{\Sigma,\Gamma\mid Q}} \ar{rr}{\delta_{\Delta.\Gamma}^{\sharp}(g^{+})}&	&\Delta.\Gamma\mid \delta_{\Delta.\Gamma}^{\sharp}(Q) \ar{d}{r_{\Delta,\Gamma\mid Q}} \\
\Sigma.\Gamma.\Gamma\mid g^{\sharp}Q\times \Dist \Gamma \ar{rr}{(Ug.\Gamma.\Gamma)^{+}}	& &		\Delta.\Gamma.\Gamma\mid Q \times \Dist \Gamma 
\end{tikzcd}
$$
$$
\begin{tikzcd}
\Delta.\Gamma\mid \delta_{\Delta.\Gamma}^{\sharp}(R) \ar{d}[left]{r_{\Delta,\Gamma\mid R}} \ar{rr}{ \delta_{\Delta.\Gamma}^{\sharp}(h)}&	&\Delta.\Gamma\mid \delta_{\Delta.\Gamma}^{\sharp}(Q) \ar{d}{r_{\Delta,\Gamma\mid Q}} \\
\Delta.\Gamma.\Gamma\mid R\times \Dist \Gamma \ar{rr}{ h\times \Dist\Gamma}	& &		\Delta.\Gamma.\Gamma\mid Q \times \Dist \Gamma 
\end{tikzcd}
$$
$$
\begin{tikzcd}
\Sigma.\Gamma.\Gamma \mid g^{\sharp}Q\times \Dist \Gamma \ar{rrr}{j_{\Sigma,\Gamma,g^{\sharp}Q,gf, P, c^{*}}} \ar{d}[left]{g.\Gamma.\Gamma^{+}}
& & & \Sigma.\Gamma.\Gamma\mid g^{\sharp}Q\times g^{\sharp}P\ar{d}{ g.\Gamma.\Gamma^{+} } \\
\Delta.\Gamma.\Gamma \mid Q\times \Dist \Gamma \ar{rrr}{j_{\Delta,\Gamma,Q,f,P,c}} & & & \Delta.\Gamma.\Gamma\mid Q\times P 
\end{tikzcd}
$$
$$
\begin{tikzcd}
\Delta.\Gamma.\Gamma \mid R\times \Dist \Gamma \ar{rrr}{j_{\Delta,\Gamma,R,f,\langle \delta_{\Delta,\Gamma}^{+},c'\circ h\rangle}} \ar{d}[left]{h\times \Dist \Gamma}
& & & \Delta.\Gamma.\Gamma\mid R\times P\ar{d}{ h\times P} \\
\Delta.\Gamma.\Gamma \mid Q\times \Dist \Gamma \ar{rrr}{j_{\Delta,\Gamma,Q,f,P, c}} & & & \Delta.\Gamma.\Gamma\mid Q\times P 
\end{tikzcd}
$$
where the arrow $c^{*}: \Sigma.\Gamma\mid \delta_{\Sigma.\Gamma}^{\sharp} (g^{\sharp}Q)\to \Sigma.\Gamma.\Gamma\mid g^{\sharp}Q\times g^{\sharp}P$ is given by the universality of the pullback along $g$:
$$
\begin{tikzcd}
\Sigma.\Gamma\mid \delta_{\Sigma.\Gamma}^{\sharp}(g^{\sharp}Q) \ar{rr}{(g.\Gamma.\Gamma)^{+}} 
\ar{dd}{\delta_{\Sigma.\Gamma}^{+}}\ar[dashed]{rd}{c^{*}}
& & \Delta.\Gamma\mid \delta_{\Delta.\Gamma}^{\sharp}(Q) \ar{d}{c}\\
 & 	\Sigma.\Gamma.\Gamma\mid g^{\sharp}Q\times g^{\sharp} P \ar{d}[left]{\pi_{\Sigma.\Gamma.\Gamma}}			\ar{r}{(g.\Gamma.\Gamma)^{+}} & \Delta.\Gamma.\Gamma\mid Q\times P \ar{d}{\pi_{\Delta.\Gamma.					\Gamma}} \\
\Sigma.\Gamma.\Gamma\mid g^{\sharp}Q\ar{r}{\pi_{\Sigma.\Gamma.\Gamma}} &	\Sigma.\Gamma.\Gamma   \ar{r}{g.\Gamma.\Gamma} & \Delta.\Gamma.\Gamma
\end{tikzcd}
$$
These (admittedly complicated) conditions essentially say that $\Dist \Gamma$, $r_{\Delta,\Gamma}$ and $j_{\Delta,\Gamma,Q,f,P,c}$ ``do not depend on'' $\Delta$ and $Q$, i.e.~are invariant under substitutions of the variables in $\Delta$ and $Q$. In other words, they assure the soundness of the equations below for the substitution operation:
\begin{align*}
\partial(t)[v/y] & = \partial(t[v/y]) \\
\partial(t)[c/\epsilon] & = \partial(t) \\
\J (t,u,a,[x]b)[v/y] & = \J (t[v/y],u[v/y],a[v/y], [x]b[v/y])  \\
\J (t,u,a,[x]b)[c/\epsilon] & = \J (t,u,a[c/\epsilon], [x]b[c/\epsilon]) 
\end{align*}

\begin{proof}[Proof of Proposition \ref{prop:interpretation}]
The definition of the functor $\model{\_ }_{m}:\CTX_{0}\to \BB C_{0}$ is standard. 
The functor $\nudel{\_ }_{m}: \CTX\to \BB C$ is defined as follows:

\begin{itemize}
\item to any pure predicate $(x,y\in A)\C C(x,y)$ we associate $\nudel{\C C}_{m}\in \D P^{\flat}(\model{A}_{m})$ by induction as follows:

	\begin{itemize}
	\item if $\C C(x,y)=D_{A}(x,y)$, then $\nudel{A}_{m}=\Dist{ (\model{A}_{m})}$;
	\item if $\C C(x,y)=\C C_{1}(\pi_{1}(x),\pi_{1}(y))\times \C C_{2}(\pi_{2}(x),\pi_{2}(y))$, then $\nudel{\C C}_{m}= \pi_{1}^{\sharp}\nudel{\C C_{1}}_{m}\times \pi_{2}^{\sharp}\nudel{\C C_{2}}_{m}$;
	\item if $\C C(x,y)=(\Pi z\in B)\C C'(xz,yz)$, then $\nudel{\C C}_{m}= \Pi_{\model{B}_{m}}\nudel{\C C'}_{m}$;
	\item if $\C C(x,y)=(\Pi w,w'\in B)(D_{B}(w,w')\to \C C'(xww',yww'))$, then 
	$\nudel{\C C}_{m}=\nudel{\C C'}_{m}^{\left( \Dist{\model{B}_{m}} \right)} $.
%
	
%

	\end{itemize}
	
\item For any arrow $(\B t\mid a): (\B x\in \Phi_{0}\mid \BS\epsilon \in \Phi_{1}(\B x))\to (\B y\in \Psi_{0}\mid \delta\in \C C(\B y))$ we define an arrow $(\model{t}_{m}\mid \nudel{a}_{m}): (\model{\Phi_{0}}_{m}\mid \nudel{\Phi_{1}}_{m})\to (\model{\Psi_{0}}_{m}\mid (\model{t}_{m})^{\sharp} \nudel{\C C}_{m})$ in $\BB C$, 
where $\nudel{a}_{m}$ is a vertical morphism in $\BB C_{\D P}^{\model{\Phi_{0}}_{m}}$, by induction on $a$. We here only mention the cases related to the difference type:
	\begin{itemize}
	\item if $a=\partial(t)$, where 
	$(\B xt\in A$, $\nudel{a}_{m}= r_{\model{\Phi'_{0}}_{m}, \model{A}_{m}\mid \nudel{\Phi_{1}}_{m}}\circ \model{t}_{m}$, where 
	$\B x\in \Phi_{0}= \B x'\in \Phi'_{0}, x\in A$;
	
	\item if $a=\J (t,u,b,[x]c)$, 
where 
	 $a\in D_{A}(t,u)$,  $(\B x, \B y\in \Phi_{0})\C C(\B x, \B y)=\C C^{\flat}(\B t(\B x, \B y))$ and 
	 $(x\in A)c\in \C C(x,x )$,
	then  $\nudel{a}= j_{\model{\Phi_{0}}_{m},\model{A}_{m},\model{\Phi_{1}}_{m}, \model{\B t}_{m},\nudel{\C C^{\flat}}_{m},\nudel{c}_{m}}$.	\end{itemize}
\end{itemize}
\end{proof}

\begin{proof}[Proof of Theorem \ref{thm:wfs}]
%

We must show that $i_{f}\in \C L_{\D P}$, so let $P\in \D P(\Sigma)$ be a predicate. The problem of finding a diagonal filler for a commutative diagram of the form
$$
\begin{tikzcd}
\Gamma \ar{d}{i_{f}}\ar{rr}{d} & & \Sigma\mid P \ar{d}{\pi_{\Sigma}} \\
\Gamma.\Delta\mid \Delta_{f}\ar{rr}{h} & & \Sigma
\end{tikzcd}
$$
can be reduced, by pulling back along $h$, to that of finding a diagonal filler $j$ for a diagram of the form
$$
\begin{tikzcd}
\Gamma \ar{d}{i_{f}}\ar{rr}{c} & & \Gamma.\Delta\mid Q \ar{d}{\pi_{\Gamma.\Delta}} \\
\Gamma.\Delta\mid \Delta_{f}\ar{rr}{\pi_{\Gamma.\Delta}} & & \Gamma.\Delta
\end{tikzcd}
$$
(with $Q=h^{\sharp}P$). We will obtain $j$ by putting together a few commutative diagrams:
\begin{enumerate}
\item 
$$
\begin{tikzcd}
\Gamma \ar{rr}{i_{f}.c}\ar{d}{i_{f}} & &   \Gamma.\Delta\mid \Delta_{f}. Q \ar{d}{
r_{\Gamma\mid \Delta_{f}.Q, \Delta}  } \\
\Gamma.\Delta\mid\Delta_{f} \ar{rr}[below]{
\delta_{\Delta}^{+}. (\delta_{\Delta}^{+}\circ c). r_{\Gamma.\Delta}}  & & \Gamma.\Delta.\Delta\mid\pi_{\Gamma.\Delta}^{\sharp}(\Delta_{f}). \pi_{\Gamma.\Delta}^{\sharp}(Q).\Dist\Delta
\end{tikzcd}
$$

\item
$$
\begin{tikzcd}
\Gamma \ar{d}{i_{f}.c}  \ar{rr}{c} && \Gamma.\Delta\mid Q\\
\Gamma.\Delta\mid \Delta_{f}.Q \ar{rr}[below]{ 
\delta_{\Delta}^{+}. (\delta_{Q}\circ \delta_{\Delta}^{+})
  } & &  \Gamma.\Delta.\Delta\mid \pi_{\Gamma.\Delta}^{\sharp}(\Delta_{f}).\pi^{\sharp}_{\Gamma.\Delta}(Q).\pi^{\sharp}_{\Gamma.\Delta}(Q) \ar{u}{\pi_{\Gamma.\Delta}^{+}\circ \pi_{3}}
\end{tikzcd}
$$
where $\pi_{\Gamma.\Delta}$ is the projection $\Gamma.\Delta.\Delta\to \Gamma.\Delta$ and we use the fact
that $\delta_{\Delta}^{\sharp}(\pi_{\Gamma.\Delta}^{\sharp}(\Delta_{f}))=\Delta_{f}$ and $\delta_{\Delta}^{\sharp}(\pi_{\Gamma.\Delta}^{\sharp}(Q))=Q$, so that 
$\delta_{\Delta}^{+}: \Gamma.\Delta\mid \Delta_{f}\to \Gamma.\Delta.\Delta\mid \pi^{\sharp}_{\Gamma.\Delta}(\Delta_{f})$ and 
$(\delta_{\Delta})^{+}: \Gamma.\Delta\mid Q\to \Gamma.\Delta.\Delta\mid \pi_{\Gamma.\Delta}^{\sharp}(Q)$.

\item 
$$
\begin{tikzcd}
\Gamma.\Delta.\Delta\mid \pi_{\Gamma.\Delta}^{\sharp}(\Delta_{f}). \pi_{\Gamma.\Delta}^{\sharp}(Q).\pi_{\Gamma.\Delta}^{\sharp}(Q) 
\ar{d}{\pi_{\Gamma.\Delta.\Delta}}
\ar{rrr}{
(\pi_{\Gamma.\Delta}^{+}\circ \pi_{2}).
(\pi_{\Gamma.\Delta}^{+}\circ \pi_{3})
} & & &  \Gamma.\Delta\mid \Delta_{f}.Q \ar{d}{\pi_{\Gamma.\Delta}} \\
\Gamma.\Delta.\Delta \ar{rrr}[below]{\pi_{\Gamma.\Delta}}  & &  & \Gamma.\Delta
\end{tikzcd}
$$

By putting all this together we can obtain a diagonal filler from the diagram illustrated in Fig.~\ref{fig:landscape}, where $e=\delta_{\Delta}^{+}. (\delta_{Q}\circ \delta_{\Delta}^{+}$, and where the central diagonal filler exists by hypothesis.

\end{enumerate}

 If $h\in \D P$, then by definition of $\C L_{\D P}$ it has the right-lifting property with respect to all $f\in \C L_{\D P}$; we deduce then $\D P^{*}\subseteq \C R_{\D P}$. This proves that $p_{f}\in \C R_{\D P}$.

Since $\C L_{\D P}^{\tri}=\C R_{\D P}$ holds by definition, it remains to prove that $\C L_{\D P}=^{\tri}\C R_{\D P}$. On the one hand, from $\D P\subseteq \C R_{\D P}$, we deduce $^{\tri}\C R_{\D P}\subseteq ^{\tri}\D P\subseteq \C L_{\D P}$. For the converse direction, 
by the Retract Argument (Lemma 1.1.9 in \cite{Hovey1999}), any $g\in \C R_{\D P}$ is a retract of a projection: 
from $g=p_{g}\circ i_{g}$, and the fact that $g$ has the right lifting property with respect to $i_{g}$, we deduce that there exists $h$ (a diagonal filler of $ p_{g}\circ i_{g}= g\circ \mathrm{id}_{X}$) such that $g\circ h= p_{g}$. 
This implies in particular that there is a diagram of the form
$$
\begin{tikzcd}
X\ar[bend left]{rr}{\mathrm{id}_{X}} \ar{d}{g} \ar{r}{i_{g}} & A \ar{d}{p_{g}} \ar{r}{h} & X\ar{d}{g} \\
Y \ar[-,double]{r}{} & Y \ar[-, double]{r} & Y
\end{tikzcd}
$$
Using this we can show that given a diagram of the form
$$
\begin{tikzcd}
 C \ar{d}{f} \ar{r}{k} & X \ar{d}{g}      \\
 D \ar{r}{k'} & Y
\end{tikzcd}
$$
where $f\in \C L_{\D P}$, we can construct a diagram 
$$
\begin{tikzcd}
 C \ar{d}{f} \ar{r}{k} & X  \ar[bend left]{rr}{1_{X}}\ar{d}{g}    \ar{r}{i_{g}} & A \ar{r}{h} \ar{d}{p_{g}} & X \ar{d}{g}\\
 D
 \ar[dashed]{rru}{j}
  \ar{r}{k'} & Y \ar[-, double]{r} & Y \ar[-, double]{r} & Y
\end{tikzcd}
$$
where the diagonal filler $j$ exists since $f$ has the left-lifting property with respect to $p_{g}$, and from which we obtain a diagonal filler $j'= h\circ j$ for the original diagram. 
We have thus shown that any $f\in \C L_{\D P}$ has the left-lifting property with respect to any $g\in\C R_{\D P}$, and thus 
$\C L_{\D P}\subseteq ^{\tri}\C R_{\D P}$. 
\end{proof}

\begin{landscape}
\begin{figure}
\adjustbox{center, scale=1}{
$
\begin{tikzcd}
& & & (2)  & & & & \\
\Gamma 
\ar[bend left=20]{rrrrrrr}{c}
\ar{dd}{i_{f}}  \ar{rr}{i_{f}.c} && \Gamma.\Delta\mid \Delta_{f}.Q \ar{dd}[left]{i_{\Gamma\mid\Delta_{f}.Q, \Delta}}    \ar{rr}{e}  & &  \Gamma.\Delta.\Delta\mid \pi_{\Gamma.\Delta}^{\sharp}(\Delta_{f}).\pi_{\Gamma.\Delta}^{\sharp}(Q).\pi_{\Gamma.\Delta}^{\sharp}(Q)
\ar{dd}{\pi_{\Gamma.\Delta.\Delta}}
 \ar{rrr}{
(\pi_{\Gamma.\Delta}^{+}\circ \pi_{2}).
(\pi_{\Gamma.\Delta}^{+}\circ \pi_{3})
} & & & \Gamma.\Delta\mid \Delta_{f}.Q \ar{dd}{\pi_{\Gamma.\Delta}} \\
& (1) & & & &(3) & & \\
\Gamma.\Delta\mid \Delta_{f} \ar[bend right=20]{rrrrrrr}{\pi_{\Gamma.\Delta}} \ar{rr}[above]{ 
\delta_{\Delta}^{+}. (\delta_{Q}\circ c).i_{\Gamma.\Delta}
  } & &  \Gamma.\Delta.\Delta\mid \pi_{\Gamma.\Delta}^{\sharp}(\Delta_{f}).\pi^{\sharp}_{\Gamma.\Delta}(Q).\Dist\Delta
  \ar[dashed]{rruu}{j_{\Gamma\mid \pi^{\sharp}_{\Gamma.\Delta}(\Delta_{f}.Q), \Delta, Q,e}}
    \ar{rr}[below]{\pi_{\Gamma.\Delta.\Delta}}
     &  &   \Gamma.\Delta.\Delta \ar{rrr}{\pi_{\Gamma.\Delta}}
   & & & \Gamma.\Delta
\end{tikzcd}
$}
\caption{Construction of the diagonal filler.}
\label{fig:landscape}
\end{figure}
\end{landscape}

\section{Metric Preservation: Details}

We will need the following generalization of the notion of metric space:
a \emph{parameterized (pseudo-)metric space} (PMS) is a triple $(X,K,a)$, where $a:X\times X\to (\BB R_{\geq 0})^{K}$ satisfies, for all $k\in K$:
\begin{align*}
a(x,x)(k)& = 0 \\
a(x,y)(k)&= a(y,x)(k)\\
a(x,y)(k) & \leq a(x,z)(k)+a(z,y)(k)
\end{align*}
Usual (pseudo-)metric spaces can be identified with PMS $(X,K,a)$ where $K$ is a singleton.

We recall that $\Met $ is the category of pseudo-metric spaces and non-expansive map. $\Met$ is symmetric monoidal closed; its monoidal product is $(X,a)\otimes (Y,b)=(X\times Y, a+b)$ and the right-adjoint to $\otimes $ is $[(X,a),(Y,b)]=(\Met(X,Y), b_{\sup})$, where $b_{\sup}(f,g)=\sup\{b(f(x),g(x))\mid x\in X\}$.
It is a standard fact that $\Met$ has enough structure to interpret ST$\lambda$C$^{!}$, with $!_{r}A$ corresponding to the rescaling of a metric space by $r$.

%
%
%
%

We now describe the $\DTT$-structure associated with the forgetful functor $U:\Met \to \Set$.

For any set $X$, the class of pure predicates $\D P^{\flat}(X)$ is made of all PMS of the form $P=(X,K,a)$, with $(X\otimes X\mid P)=\coprod_{x,x'\in X}\prod_{k\in K}\widetilde a(x,x')(k)$ and projection  $\pi_{X}:(X\otimes X\mid P)\to X\times X$. 
The class $\D P(X)$ contains all pullbacks  $\langle f,g\rangle^{\sharp}P$, for all $P=(X,K,a)\in \D P^{\flat}(X)$ and  
$\langle f,g\rangle\in \Met(Y, X\otimes X)$, with $(Y\mid\langle f,g\rangle^{\sharp}P)=\coprod_{ y\in Y}\prod_{k\in K}\widetilde a(f(y),g(y))(k)$ and 
 projection $\pi_{Y}: (Y\mid \langle f,g\rangle^{\sharp}P)\to Y$.

Observe that for any projection $\pi: X\times Y\to X$ and PMS $(X,K, a)$, $(X, K,\pi^{\sharp}a)$ is still a PMS, where $\pi^{\sharp}a(u,v)(k)=a(\pi_{1}(u), \pi_{1}(v))(k)$ (in fact, $\pi^{\sharp}a(u,u)=a(\pi_{1}(u), \pi_{1}(u))=0$, $\pi^{\sharp}a(u,v)=a(\pi_{1}(u), \pi_{1}(v))=a(\pi_{1}(v), \pi_{1}(u))=\pi^{\sharp}a(v,u)$, and 
$\pi^{\sharp}a(u,v)=a(\pi_{1}(u), \pi_{1}(v))\leq a(\pi_{1}(u), \pi_{1}(w))+a(\pi_{1} (w), \pi_{1}(v))= \pi^{\sharp}a(u,w)+\pi^{\sharp}a(w,v)$).

Moreover, for all PMS $(X\times X, K,a)$, also $(X, K, \delta_{X}^{\sharp}a)$ is a metric space, where 
$\delta_{X}^{\sharp} a(x,y)(k)=a(\langle x,x\rangle, \langle y,y\rangle)(k)$ (
in fact 
$\delta_{X}^{\sharp} a(x,x)(k)=a(\langle x,x\rangle, \langle x,x\rangle)(k)=0$, 
$\delta_{X}^{\sharp} a(x,y)(k)=a(\langle x,x\rangle, \langle y,y\rangle)(k)= a(\langle y,y\rangle, \langle x,x\rangle)(k)=
\delta_{X}^{\sharp} a(y,x)(k)$, and 
$\delta_{X}^{\sharp} a(x,y)(k)=a(\langle x,x\rangle, \langle y,y\rangle)(k)\leq 
a(\langle x,x\rangle, \langle z,z\rangle)(k)+a(\langle z,z\rangle, \langle y,y\rangle)(k)=\delta_{X}^{\sharp} a(x,z)(k)+\delta_{X}^{\sharp} a(z,y)(k)$).

An arrow in the slice category $\Set_{\D P}^{X}$ between $\coprod_{x\in X}\prod_{k\in K}\widetilde a(f(x),g(x))(k)$ and \\ 
$\coprod_{x\in X}\prod_{h\in H}\widetilde b(f'(x),g'(x))(h)
$ 
is given by a function $\varphi:  X \times (\BB R_{\geq 0})^{K}\times H \to \BB R_{\geq 0}$ such that 
$\varphi(x, \phi,h)\in \widetilde b(f'(x), g'(x))(h)$. 
Given PMS $P=(Y,K,a)$ and $Q=(Z,H,b)$, let $P*Q=(Y\times Z, K+H, a+b)$, where 
$a+b(\langle y,z\rangle, \langle y',z'\rangle)(\langle 0,k\rangle)= a(y,y')(k)$ and 
$a+b(\langle y,z\rangle, \langle y',z'\rangle)(\langle 1,h\rangle)=b(z,z')(k)$.
One can check that:
\begin{itemize}

\item for all predicates $P=\langle f,g\rangle^{\sharp}P'$ and $Q= \langle f',g'\rangle^{\sharp}Q'\in \D P(X)$, 
where $P'\in \D P^{\flat}(Y)$ and $Q'\in \D P^{\flat}(Z)$ are
  the PMS $(Y,K,a)$ and $(Z,H,b)$, 
 the product $P\times Q\in \D P(X)$ is $\langle \langle f, f'\rangle,\langle g,g'\rangle\rangle^{\sharp}(P*Q)$, with  
 $(X\mid P\times Q)=\coprod_{x\in X} \prod_{u\in K+H}(\widetilde a(f(x),g(x))+\widetilde b(f'(x),g'(x)))(u)$.
Observe that if $P,Q\in \D P^{\flat}(X)$, then $P\times Q=P*Q\in \D P^{\flat}(X)$.

%
\item for any pure predicate $P=(X,K,a)\in \D P^{\flat}(X)$ and set $I$, the dependent product $\Pi_{I}P\in \D P^{\flat}(X^{I})$ is the PMS $(X^{I},  I\times K, \Pi_{I}a)$ where $(\Pi_{I}a)(f,g)(\langle i,k\rangle)=a(f(i), g(i))(k)$;

\item for all PMS $P=(X,K,a)$ and $Q=(Y,K,b)\in \D P(Y)$, the  dependent product $Q^{\left (\Dist X\right)}\in \D P^{\flat}(Y^{X\times X})$ is the PMS $(Y^{X\times X}, H\times (X\otimes X \mid P), \F p_{a,b})$, where $\F p_{a,b}(f,g)(\langle h, \langle\langle x,y\rangle, \phi\rangle\rangle)=b(f(x,y), g(x,y))(h)$.

%

\end{itemize}
%
 
%

The difference structure is as follows:
\begin{itemize}
\item for all metric spaces $(X,a)$, $\Dist X\in \D P^{\flat}(X)$ is $(X,a)$ itself, seen as a PMS;

\item for all metric spaces $(X,a)$, and set $Y$, predicate $P=\langle f,g\rangle^{\sharp}P'\in \D P(Y\otimes (X \otimes X))$, (with $(Y\otimes X\otimes X\mid P)=\coprod_{y\in Y, x,x'\in X}\prod_{k\in K}\widetilde a(f(y,x,x'),g(y,x,x'))$, 
the pullback $(Y\otimes X\mid \delta_{X}^{\sharp}P)$ is 
$\coprod_{y\in Y, x\in X}\prod_{k\in K}\widetilde a(f(y,x,x),g(y,x,x))$; 
the morphism $r_{Y,X \mid P}:(Y\otimes X\mid \delta^{\sharp}_{X}P) \to (Y\otimes X\otimes X\mid P\times \Dist X)$ is 
given by $r_{Y,X\mid P}(\langle \langle y,x\rangle, \phi\rangle)(k)=\langle \langle y,x,x\rangle, \phi+0\rangle$.

%

\item for all metric spaces $(X,a)$, $(Y,b)$ predicate $P=\langle m,n\rangle^{\sharp}P'\in \D P(Y\otimes X\otimes X)$ (with $(Y\otimes X\otimes X\mid P)=\coprod_{y\in Y, x,x'\in X}\prod_{h\in H}\widetilde b(m(y,x,x'), n(y,x,x'))(h)$, pure predicate
$Q=(Z,K,e)\in \D P^{\flat}(Z)$, non-expansive functions $\langle f,g\rangle\in \Met (Y\otimes X\otimes X, Z\otimes Z)$ and for any 
 function $c: (Y'\otimes X\mid \delta_{X}^{\sharp}P )\to( Y\otimes X\otimes X\mid P \times \langle f,g\rangle^{\sharp}Q)$ satisfying 
 $c(\langle\langle y,x\rangle, \phi\rangle)  =  \langle\langle y, x,x\rangle,c'(y,x,\phi) \rangle$, with $c'(y,x,\phi): H+K \to \BB R_{\geq 0}$ satisfying $c'(y,x,\phi)(\langle 0,k\rangle)=\phi(k)$ and $c'(y,x,\phi)(\langle 1,h\rangle)\in \widetilde e(f(y,x,x),g(y,x,x))$, for all $y\in Y'$, $x,x'\in X$, $\phi\in ( \BB R_{\geq 0})^{K}$ with $\phi(k)\geq  b'(x,x')(k)$ and 
$r\geq a(x,x')$, we let
$\big(j_{Y,X,Q,\langle f,g\rangle,P,c}(\langle\langle y,x,x'\rangle, \langle \phi ,r\rangle\rangle)\big )_{2}= \psi_{y,x,x',\phi,r}: H+K\to \BB R_{\geq 0}$, where 
$\psi_{y,x,x',\phi,r}(\langle 0,k\rangle)=\phi(k)$, and $\psi_{y,x,x',\phi,r}(\langle 1,h\rangle)= c'(y,x,\phi)(\langle 1,h\rangle)+ r$.

In fact, since $\phi(k)\geq b'(x,x')(k)\geq b'(x,x)(k)$, we have  $c'(y,x,\phi)(\langle 1,h\rangle)\geq e(f(y,x,x),g(y,x,x))(h)$; moreover, 
from $\langle f,g\rangle\in \Met (Y\otimes X\otimes X, Z\otimes Z)$ it follows 
$e(f(y,x,x'),f(y,x,x))+ e(g(y,x,x),g(y,x,x'))\leq b(y,y)+a(x,x)+a(x,x') =a(x,x')\leq r$; then, using the triangular law, we deduce $c'(y,x,\phi)(\langle 1,h\rangle) + r\geq  e'(f(y,x,x'),g(y,x,x')(h)$.
\end{itemize}

Observe that $((j_{Y,X,Q,\langle f,g\rangle,P,c}\circ r_{Y,X\mid P})(x,y,\phi))_{2}=c'(y,x,\phi)$.
Moreover, when $Q=\Dist X$, $f(y,x,x')=x$, $g(y,x,x')=x'$ and $c'(y,x,\phi)(\langle 1,h\rangle)=0$, 
we deduce that $j_{Y,X,Q,\langle f,g\rangle,P,c}=\mathrm{id}_{Y\times X\times X\mid Q\times \Dist X}$.
%
 Hence both the $\beta$- and $\eta$-rules are valid. 

We leave to the reader to check that the validity of all required coherence conditions.

\subparagraph*{Extensionality}
This model does not satisfy any of the extensionality axioms discussed in Section \ref{sec2}. In fact, the 
type $D_{A\otimes B}$ is interpreted by the PMS corresponding to the 
monoidal product in $\Met$ of (the interpretation of) $D_{A}$ and $D_{B}$, which does \emph{not} coincide with the product in $\D P(\model A\times \model B)$. 

Similarly, the type $D_{A\multimap B}$ is interpreted by the PMS corresponding to the right-adjoint to the monoidal product in $\Met$, which does \emph{not} coincide with either of the two dependent products of predicates.

\section{Differential Logical Relations: Details}

A DLR is a triple $(X,L,\rho)$ where $X$ is a set, $L$ is a complete lattice and $\rho \subseteq X\times L \times X$. A map of DLR $(X,L,\rho )$ and $(Y,M,\mu)$ is a pair $(f,\varphi)$, where $f:X\to Y$ and $\varphi:X\times X\times L\to  M$ is such that $\rho(x,\epsilon,y)$ implies
$\mu(f(x),\varphi(x,y,\epsilon), f(y))$ and $\mu(f(y), \varphi(x,y,\epsilon), f(x))$.
This notion of map is a slight variation with respect to \cite{dallago}, where the auxiliary map $\varphi$ goes from $X\times L$ to $ M$. Yet, this change does not affect the higher-order structure of DLR (see below), and the two families of maps are related by a retraction $\begin{tikzcd}M^{X\times L} \ar[bend left=5]{r}{h} & \ar[bend left=5]{l}{k}M^{X\times X\times L}\end{tikzcd}$ where $h(\varphi)(x,y,\epsilon)=\varphi(x,\epsilon)$ and $k(\psi)(x,\epsilon)=\sup\{\psi(x,y,\epsilon)\mid y\in X\land  \rho(x,\epsilon,y)\}$. 

DLR and their maps form a category $\DLR$, where the identity of a DLR $(X,L,\rho)$ is the map 
$(f, \lambda xy\epsilon.\epsilon)$, and composition of $(f,\varphi)$ and $(g,\psi)$ is $(g\circ f, x,y,\epsilon\mapsto\psi(f(x),f(y), \psi(x,y,\epsilon))$. 

$\DLR$ is cartesian closed, with the product and exponential of DLR $(X,L,\rho)$ and $(Y,M,\mu)$ being the DLR
$(X\times Y, L\times M, \rho\times \mu)$ and $(Y^{X}, M^{X\times X\times L}, \rho^{(X\times X\times \mu)})$, where $\rho^{(X\times X\times \mu)}(f,\varphi,g)$ holds if 
$\rho(x,\epsilon,y) $ implies $\mu(f(x), \varphi(x,y,\epsilon), f(y))$,
$\mu(f(x), \varphi(x,y,\epsilon), g(y))$,
$\mu(g(x), \varphi(x,y,\epsilon), f(y))$ and $\mu(g(x), \varphi(x,y,\epsilon), g(y))$.

We now describe the $\DTT$-structure associated with the forgetful functor $U:\DLR \to \Set$.

For any set $X$, $\D P^{\flat} (X)$ is made of all DLR of the form $P=(X,L,\rho)$, with  $X\times X\mid P=\coprod_{ x,x'\in X}\widetilde \rho(x,x')$, and projection $\pi_{X}:(X\times X\mid P)\to X\times X$.
$\D P(Y)$ is made of all pullbacks $\langle f,g\rangle^{\sharp}P$, where $P=(X,L,\rho)\in \D P^{\flat}(X)$, and  $f,g:Y\to X$, with $(Y\mid \langle f,g\rangle^{\sharp}P)=\coprod_{ y\in Y}\widetilde \rho(f(y), g(y))$ and projection $\pi_{Y} :(Y\mid \langle f,g\rangle^{\sharp}P)\to Y$.
 
%

An arrow in the slice category $\Set_{\D P}^{X}$ between $\coprod_{x\in X}\widetilde \rho(f(x),g(x))$ and 
$\coprod_{x\in X}\widetilde \mu(f'(x),g'(x))$ (for given DLR $(X,L,\rho)$ and $(Y,M,\mu)$) is given by a function 
$\varphi:  X \times L \to M$  such that $\rho(f(x),\epsilon,g(x))$ implies $\mu(f'(x), \varphi(x,\epsilon), g'(x))$. One can check that:
\begin{itemize}

\item for all predicates $P=\langle f,g\rangle^{\sharp}P'\in \D P(X)$ and $Q=\langle f',g'\rangle^{\sharp}Q'\in \D P(X)$, their product is the predicate 
$\langle \langle f,f'\rangle, \langle g,g'\rangle\rangle^{\sharp}(P'\times Q')\in \D P(X)$.
Observe that 
%
%
 if $P$ and $Q$ are in $\D P^{\flat}(X)$, then $P\times Q\in \D P^{\flat}(X)$;
%

\item for all pure predicate $P=(X,L,\rho)\in \D P^{\flat}(X)$ and set $I$, the dependent product $\Pi_{I}P\in \D P^{\flat}(X^{I})$ is  the DLR  $(X^{I},L^{I},\rho^{I})$, where 
$\rho^{I}(f,\varphi,g)$ iff $\forall i\in I$, $\rho(f(i),\varphi(i),g(i))$; 

\item for all DLR $(X,L,\rho)$ and pure predicate $Q=(Y,M,\mu)\in \D P^{\flat}( Y)$, the dependent  product $P^{\left (\Dist X\right)}\in \D P(Y^{X\times X})$ is the exponential of $ (X,L,\rho)$ and $Q$ in $\DLR$.
%

\end{itemize}

The difference structure is as follows:
\begin{itemize}
\item for all DLR $(X,L,\rho)$, $\Dist X\in \D P^{\flat}(X)$ is just $(X,L,\rho)$;
\item for all DLR $(X,L,\rho)$, set $Y$ and predicate $P= \langle f,g\rangle^{\sharp}P'\in \D P(Y\times X\times X)$, where $(Y\times X\times X\mid P)=\coprod_{y\in Y,x,x'\in X}\widetilde \mu(f(y,x,x'),g(y,x,x'))$, the pullback $(Y\times X\mid \delta_{X}^{\sharp}P)$ is $\coprod_{y\in Y, x\in X}\widetilde \mu(f(y,x,x),g(y,x,x))$, while $(Y\times X\times X\mid P\times \Dist X)$ is 
$\coprod_{y\in Y,x,x'\in X}\widetilde \mu(f(y,x,x'),g(y,x,x'))\times \widetilde \rho(x,x')$; we let then 
$r_{Y,X\mid P}(\langle \langle y,x\rangle, s\rangle)= \langle \langle y,x,x\rangle,\langle s, \| x\|_{\rho}\rangle\rangle$;

\item for all DLR $(X,L,\rho)$, sets $Y,Z,W$, predicate $Q=\langle m,n\rangle^{\sharp}Q'\in \D P(Y\times X\times X)$, with $(Y\times X\times X\mid Q)=\coprod_{y\in Y,x,x'\in X}\widetilde \lambda(m(y,x,x'),n(y,x,x'))\in \D P(Y\times X\times X)$, pure predicate $P=(Z,M, \mu)\in \D P^{\flat}(Z)$ and arrows $f,g: Y\times X\times X\to  Z$, for all morphisms $c: (Y\times X\mid \delta_{X}^{\sharp}Q) \to( Y \times X\times X \mid Q\times \langle f,g\rangle^{\sharp}P)$ such that $c(\langle \langle y,x\rangle, s\rangle)= \langle \langle y,x,x\rangle,\langle s, c'(y,x,s)\rangle\rangle$, with $c'(y,x,s)\in \widetilde \mu(f(y,x,x),g(y,x,x))$, we can define a diagonal filler 
$j: (Y\times X\times X \mid Q\times \Dist X) \to ( Y\times X\times X\mid Q\times \langle f,g\rangle^{\sharp}P)$ by 
$$
(j(\langle \langle y,x,x'\rangle, \langle s,r\rangle\rangle))_{2}=
\langle s, \sup\{c'(y,x,s), \inf\widetilde\mu(f(y,x,w),g(y,x,w)  ) \mid 
w\in X \land \rho(x,r,w) \}  \rangle
$$
We have that $((j\circ r_{Y,X\mid Q})(\langle y,x\rangle))_{2}=\langle s, c'(y,x,s)\rangle$, since 
by separatedness $\rho(x,\|x\|_{\rho}, w)$ implies $w=x$, and $\inf\widetilde\mu(f(y,x,x),g(y,x,x))= \| f(y,x,x),g(y,x,x)\|_{\mu}\leq c'(y,x,s)$.

\end{itemize}
We leave to the reader to check the validity of all required coherence conditions.

\subparagraph*{Extensionality}
The DLR model of $\DTT$ satisfies the extensionality axioms \eqref{cext} and \eqref{fext2}.
%
%
In fact, given simple types $A,B$, the interpretation of $D_{A\times B}$ is generated by the cartesian product in $\DLR$ 
of the interpretations of $D_{A}$ and $D_{B}$, which coincides with the product in $\D P(\model A)$. 
Moreover, the interpretation of $D_{A\to B}$ is generated by the exponential of $D_{A}$ and $D_{B}$ in $\DLR$, which coincides with the pullback along $\langle h,h\rangle$ of the  dependent product of the interpretation of $D_{B}$ and $A$, where $h,h: Y^{X}\to Y^{X\times X}$ is given by $h(f)(x,y)=f(x)$, and this in turn coincides with 
the interpretation of the type $(\Pi x,y\in A)(D_{A}(x,y)\to D_{B}(f(x),g(y)))$. 

One can also check that derivatives satisfy the equational rules \eqref{jca}, \eqref{jcb},\eqref{jc}, \eqref{jl2a} and \eqref{jl2}.

\section{Change Structures: Details}

Change structures (as defined in Section \ref{sec5}) and functions form  a category $\CS$ which is cartesian closed. In particular, given 
 CS $(X,\Delta_{X},\oplus,\ominus)$ and $(Y, \Delta_{Y}, \oplus',\ominus')$, their cartesian product is 
$(X\times Y, \Delta_{X}\times \Delta_{Y}, \oplus\times \oplus', \ominus\times\ominus')$, and 
 their exponential  is$(Y^{X},( \Delta_{Y})^{X\times \Delta_{X}}, \oplus^{*}, \ominus^{*})$, where
$(\Delta_{Y})^{X\times \Delta_{X}}(f)$ contains all functions $\varphi:X\times \Delta_{X}\to \Delta_{Y}$, such that for all $x\in X$ and $\de x\in \Delta_{X}x$, $\varphi(x,\de x)\in \Delta_{X}f(x)$, 
$(f\oplus^{*} \varphi)(x)=f(x)\oplus' \varphi(x, \B 0_{x})$ and 
$(f\ominus^{*}g)(x,\de x)= f(x\oplus \de x)\ominus g(x)$.

We now describe the $\DTT$-structure associated with the forgetful functor $U:\CS \to \Set$.

For any set $X$, $\D P^{\flat}(X)$ is made of CS with base set $X$, where for a CS $P=(X,\Delta_{X},\oplus,\ominus)$, $(X\times X\mid P)=\coprod_{ x,x'\in X}\Delta_{X}(x,x')$, with projection $\pi_{X}:X\times X\mid P\to X\times X$. 
$\D P(Y)$ is made of all pullbacks $\langle f,g\rangle^{\sharp}P$, where $P\in \D P^{\flat}(X)$ is a CS and $f,g:Y\to X$, with $(Y\mid  \langle f,g\rangle^{\sharp}P)=\coprod_{ y\in Y}\Delta_{X}(f(y), g(y))$, and with projection $\pi_{Y} : (Y\mid \langle f,g\rangle^{\sharp}P)\to Y$.
 
%


An arrow in the slice category between $\coprod_{y\in Y}\Delta_{X}(f(x),g(x))$ and $\coprod_{y\in Y}\Delta_{X'}(f'(x),g'(x))$ is given by a function $\varphi: Y\times  \Delta_{X} \to  \Delta_{X'}$, such that whenever $\de x\in \Delta_{X}(f(x),g(x)$, $(\varphi(y,\de x))_{2}\in \Delta_{X'}(f'(x),g'(x))$. One can check that:
\begin{itemize}

\item for all predicates $P= \langle f,g\rangle^{\sharp}P\in \D P(X)$ and $Q=\langle f',g'\rangle^{\sharp}Q'\in \D P(X)$, their product is $P\times Q=\langle \langle f,f'\rangle, \langle g,g'\rangle\rangle^{\sharp}(P'\times Q')$  (where $P'\times Q'$ indicates the product of $P'$ and $Q'$ in $\CS$), with 
$(X\mid P\times Q)= \coprod_{x\in X}\Delta_{Y}(f(x),g(x))\times \Delta_{Y'}(f'(x),g'(x))$, and projection 
$\pi_{X}: (X\mid P\times Q)\to X$.

%

%

\item for all pure predicate $P\in \D P^{\flat}(X)$, where $P=(X,\Delta_{X},\oplus, \ominus)$ and set $I$, the dependent product $\Pi_{I}P\in \D P^{\flat}(X^{I})$ is the CS
$(X^{I}, (\Delta_{X})^{I}, \oplus^{I}, \ominus^{I})$ where 
$(\Delta_{X})^{I}f$ is made by all functions $\varphi: I\to \Delta_{X}$ such that for all $i\in I$, $\varphi(i)\in \Delta_{X}f(x)$, $(f\oplus^{I} \varphi )(i)= f(i)\oplus \varphi(i)$ and $(f\ominus^{I}g )(i)=f(i)\ominus g(i)$;

\item for all CS $(X,\Delta_{X},\oplus,\ominus)$ and pure predicate $Q\in \D P^{\flat}( Y)$, where $Q=( Y, \Delta_{Y}, \oplus', \ominus')$, 
the dependent  product $P^{\left (\Dist X\right)}\in \D P(Y^{X\times X})$
is the CS 
$(Y^{X\times X},( \Delta_{X})^{X\times X\times \Delta_{X}}, \oplus^{*},\ominus^{*})$, where 
$ ( \Delta_{Y})^{X\times X\times \Delta_{X}}(f)$ contains all functions $\varphi:X\times X\times \Delta_{X}\to \Delta_{Y}$ such that for all $x,x'\in X$ and $\de x\in \Delta_{X}(x,x')$, $\varphi(x,x',\de x)\in \Delta_{Y}f(x,x')$, and $(f\oplus^{*}\varphi )(x,x')= f(x,x') \oplus \varphi(x,x', x\ominus x')$, 
$(f\ominus^{*}g)(x,x',\de x)= f(x,x')\ominus g(x,x')$.

\end{itemize}

The difference structure is as follows:
\begin{itemize}
\item for all CS $(X,\Delta_{X},\oplus, \ominus)$, $\Dist X\in \D P^{\flat}(X)$ is the CS itself;

\item for all CS $(X,\Delta_{X},\oplus, \ominus)$, set $Y$ and predicate $P=\langle f,g\rangle^{\sharp}P'\in \D P(Y\times X\times X)$, with $P'=(W,\Delta_{W},\oplus',\ominus')$, the pullback $(Y\times X\mid \delta_{X}^{\sharp}P)$ is 
$\coprod_{y\in Y, x\in X}\Delta_{W}(f(y,x,x),g(y,x,x))$, while 
$(Y\times X\times X\mid P\times \Dist X)$ is 
$\coprod_{y\in Y, x,x'\in X}\Delta_{W}(f(y,x,x'),g(y,x,x'))\times \Delta_{X}(x,x')$; we let then $r_{Y,X\mid P}(\langle \langle y,x\rangle, \de x\rangle)= \langle \langle y,x,x\rangle, \langle \de x, \B 0_{x}\rangle \rangle$;

\item for all CS  $(X,\Delta_{X},\oplus, \ominus)$, sets $Y,Z,W$, predicate
$Q=\langle m,n\rangle^{\sharp}Q'\in \D P(X)$, with $Q'=(W,\Delta_{W},\oplus',\ominus')$, pure predicate 
 $P=(Z, \Delta_{Z},\oplus, \ominus)\in \D P^{\flat}(Z)$, functions $f,g: Y\times X\times X\to Z$ and  
$c:( Y\times X\mid \delta_{X}^{\sharp}Q) \to( Y\times X\times X \mid Q\times \langle f,g\rangle^{\sharp}P)$ such that 
$c(\langle\langle y,x\rangle,\de x\rangle)=\langle \langle y,x,x\rangle,\langle \de x, c'(y,x,\de x)\rangle\rangle$, with $c'(y,x,\de x)\in \Delta_{Z}(f(y,x,x), g(y,x,x))$, we can define a diagonal filler
$j: (Y\times X\times X \mid Q\times \Dist X)\to( Y\times X\times X\mid Q\times \langle f,g\rangle^{\sharp}P)$ by 
\begin{align*}
(j(\langle \langle y,x,x'\rangle, \langle \de x, \de' x\rangle\rangle ))_{2} & =  
\langle \de x, 
\de f( \langle y,x,x\rangle, \langle \B 0_{y}, \B 0_{x}, \ominus \de' x\rangle )\\
& +
c'(y,x,\B 0_{x})\\ &
+
\de g(\langle y,x,x\rangle, \langle \B 0_{y}, \B 0_{x}, \de' x\rangle )
\rangle
\in \Delta_{W}(x,x')\times \Delta_{Z}(f(y,x,x'),g(y,x,x'))
\end{align*}

Under the assumption that all sets $\Delta_{X}(x,x')$ are singletons, 
we have that $((j\circ r_{Y,X\mid Q})(\langle\langle y,x\rangle, \de x\rangle))_{2}= \langle \de x, c'(y,x, \de x )$ (since it must be $\de x=\B 0_{x}$). Moreover, when $Z=X$, $f(x,x')=x$, $g(x,x')=x'$ and $c'(y,x,\de x)=\B 0_{x}$, we have that 
$(j(\langle \langle y,x,x'\rangle, \langle \de x, \de 'x\rangle\rangle  )_{2}=\langle \de x, \de' x\rangle$ (since $\de g(\langle x,x\rangle, \langle \de x,\de' x\rangle)=\de' x$).

\end{itemize}

\subparagraph*{Extensionality}
The CS model of $\DTT$ satisfies the extensionality axiom \eqref{cext}, together with the equational rules \eqref{jca} and \eqref{jcb} and \eqref{jc}. In fact, 
given simple types $A,B$, the interpretation of $D_{A\times B}$ is generated by the cartesian product in $\DLR$ 
of the interpretations of $D_{A}$ and $D_{B}$, which coincides with the product in $\D P(\model A)$.

The CS model does not satisfy either \eqref{fext1} or \eqref{fext2}; indeed, 
if the CS $X$ and $Y$ interpret two simple types $A$ and $B$, it seems that the exponential of $X$ and $Y$ in $\CS$ cannot  captured by a type of $\DTT$.

%
%
%
%
%
%

\section{Cartesian Differential Categories: Details}

A \emph{cartesian differential category} \cite{Blute2009} (in short, CDC) is a left-additive cartesian category $\BB C$ such that for all 
arrow $f: X\to Y$ there exists an arrow $\de f: X\times X\to Y$ satisfying the axioms below:
\begin{description}
\item[D1.] $\de(f+g)=\de f+\de g$, $\de 0=0$;
\item[D2.] $\de f\circ \langle h+k,v\rangle=\de f\circ \langle h,v\rangle+\de f\circ \langle k,v\rangle$, and $\de f\circ \langle 0,v\rangle=0$;
\item[D3.] $\de(\mathrm{id})=\pi_{1}$, $\de(\pi_{1})=\pi_{1}\circ \pi_{1}$, $\de(\pi_{2})=\pi_{2}\circ \pi_{1}$;
\item[D4.] $\de(\langle f,g\rangle)= \langle \de f, \de g\rangle$;
\item[D5.] $\de(g\circ f)=\de g\circ \langle \de g, g\circ \pi_{2}\rangle$;
\item[D6.] $\de(\de f)\circ \langle \langle g,0\rangle, \langle h,k\rangle\rangle=
\de f\circ \langle g,k\rangle$;
\item[D7.] $\de(\de f))\circ \langle \langle 0,h\rangle, \langle g,k\rangle\rangle = \de(\de f))\circ \langle \langle 0,g\rangle,\langle h,k\rangle\rangle$.
\end{description}
For an intuitive explanation of the axioms see \cite{Blute2009}.

A \emph{differential $\lambda$-category} \cite{Manzo2010} is a CDC which is also a cartesian closed category, and where $\de f$ further satisfies the axiom below:

\begin{description}

\item[D-curry.] $\de(\lambda(f))= \lambda(\de f\circ \langle \pi_{1}\times 0, \pi_{2}\times \mathrm{Id}\rangle)$.
\end{description}

%

We now describe the $\DTT$-structure associated with the identity functor $\mathrm{id}_{\BB C}:\BB C\to \BB C$ on a differential $\lambda$-category $\BB C$.

We let $X^{n}$ be a shorthand for $X\times \dots \times X$ $n$ times. 
For any object $X$, $\D P^{\flat}(X)$ contains all $X^{n}$, with associated object $X^{n+2}$ and projection $\pi_{X}:X^{n+2}\to X\times X$;  $\D P(X)$ is made of all objects of the form $X\times C$, with associated projection $\pi_{X}: X\times C\to X$.
For all $f,g:Y\to X$, the pullback $\langle f,g\rangle^{\sharp}X\times C$ is just $Y\times C$.

An arrow in the slice category $\BB C_{\D P}^{X}$ between  $X\times C$ and $X\times D$ is an arrow $h: X\times C\to D$;
one can check then that:
\begin{itemize}
\item the product of $X\times C$ and $X\times D$ in $\BB C_{\D P}^{X}$ is $X\times (C\times D)$;
moreover, for all pure predicate $X^{n+2}, X^{m+2}\in \D P^{\flat}(X)$, their product
$X^{n+m+2}\in \D P^{\flat}(X)$;%

\item for all objects $X,I$
and pure predicate $X^{n+2}\in \D P^{\flat}(X)$, the dependent product $\Pi_{I}(X^{n+2})$ is the pure predicate $(X^{I})^{n+2}\in \D P^{\flat}(X^{I})$; in fact, by the cartesian closure of $\BB C$ we have that  
$\BB C(\langle \pi_{W},\pi_{W}\rangle^{\sharp}(W^{n+2}),  X^{n+2}   ) \simeq \BB C( W^{n+2}, (X^{I})^{n+2})$, where $\pi_{W}:W\times I\to W$;

\item for all objects $X$ and pure predicate $Y^{n+2}\in \D P^{\flat}(Y)$, their dependent  product is the pure predicate $(Y^{X\times X})^{n+2}\in \D P^{\flat}(Y^{X\times X})$; in fact, by the cartesian closure of $\BB C$ we have 
$\BB C( \langle\pi_{W},\pi_{W}\rangle^{\sharp}W^{n+2},  Y^{n+2}   ) \simeq \BB C( W^{n+2}, (Y^{X\times X})^{n+2})$, where $\pi_{W}:W\times X\times X\to W$.

\end{itemize}

The difference structure is as follows:
\begin{itemize}

\item for all objects $X$, $\Dist X$ is $X^{3}\in \D P(X)$;
\item for all objects $X,Y$ and predicate $Q=V\times C\in \D P(Y\times X\times X)$, where $V=Y\times X\times X$, observe that $\delta_{X}^{\sharp}Q= (Y\times X)\times C$; we let then $r_{Y,X\mid Q}: 
(Y\times X)\times C \to( Y\times X^{2})\times (Q\times X)$ be  $
\mu
\times (\mathrm{id}_{Q}\times 0)$, where $\mu : Y\times X \to Y\times X^{2}$ is $\mathrm{id}_{Y}\times \delta_{X}$;

\item for all objects $X,Y,Z$, predicate $Q=(Y\times X)\times C\in \D P(Y\times X)$, 
pure predicate $P=Z^{n+2}\in \D P^{\flat}(Z)$, arrows $f,g:Y\times X\times X\to Z$, the pullback $\langle f,g\rangle^{\sharp}P$ is $(Y\times X^{2})\times Z^{n}$; for all morphisms
$c: (Y \times X)\times Q \to (Y\times X^{2})\times (Q\times Z^{n})$ such that $c= \langle \mu, \langle \mathrm{id}_{Q}\circ \pi_{2}, c'\rangle\rangle$ for some $c':(Y\times X)\times Q\to Z^{n}$, we can define a diagonal filler $j: (Y\times X^{2})\times (Q\times X)\to (Y\times X^{2})\times (Q\times Z^{n})$ by 
$$
j_{2}=\Big \langle \mathrm{id}_{Q}\circ (\pi_{1}\circ \pi_{2}) , 
c' \circ \nu \circ \pi_{1} +
\big \langle\de f\circ \langle \pi_{1}, \langle\langle 0, \pi_{2}\circ \pi_{2}\rangle, \langle 0,0\rangle \rangle\big\rangle^{n}\rangle
\Big\rangle
$$
where $\nu: Y\times X^{2}\to Y\times X$ is $\mathrm{id}_{Y}\times \pi_{1}$ and $\langle u\rangle^{n}=\langle u,u,\dots, u\rangle$ for $n$ times.

We have that $(j\circ r_{Y,X\mid Q})_{2}=\langle \mathrm{id}_{Q}\circ \pi_{1} ,c'\rangle$ (using the fact that  $\nu\circ \mu=\mathrm{id}_{Y\times X} $). Moreover, when $Z=X$, $n=1$, $f=\pi_{2}\circ \pi_{1}$, $g=\pi_{2}\circ \pi_{2}$ and $c'=0$ we have the $j_{2}= \mathrm{id}_{Q\times X}\circ \pi_{1}$. Hence both the $\beta$- and $\eta$-rules are satisfied.
\end{itemize}

\subparagraph*{Extensionality}
The CDC model of $\DTT$ satisfies the extensionality axioms \eqref{cext} and \eqref{fext1}. This follows from the fact that, if simple types $A,B$ are interpret as objects $X,Y$,
$D_{A\times B}$ is interpreted by the pure predicate $X\times Y\in \D P^{\flat}(X\times Y)$, and $D_{A\to B}$ is interpreted by the pure predicate $Y^{X}\in \D P^{\flat}(Y^{X})$.

Moreover, derivatives satisfy the equational rules \eqref{jca}, \eqref{jcb}, \eqref{jc}, as well as 
\eqref{jl1a} and \eqref{jl1}.

\end{document}